\documentclass[copyright,creativecommons]{eptcs}

\usepackage{iftex}

\ifpdf
  \usepackage{underscore}         
  \usepackage[T1]{fontenc}        
\else
  \usepackage{breakurl}           
\fi

\ifdefined \VersionLong
	\newcommand{\LongVersion}[1]{#1}
	\newcommand{\ShortVersion}[1]{}
\else
	\newcommand{\LongVersion}[1]{}
	\newcommand{\ShortVersion}[1]{#1}
\fi

\usepackage{orcidlink}
\newcommand{\orcidID}[1]{\orcidlink{#1}}
\usepackage{fontawesome}
\newcommand{\homepage}[1]{\href{#1}{\color{gray}\faHome}}

\usepackage[utf8]{inputenc}
\usepackage[english]{babel} 
\usepackage{csquotes}
\usepackage{lmodern} 
\usepackage[T1]{fontenc} 

\usepackage{amsthm}
\usepackage{thm-restate}
\usepackage{xspace}
\usepackage{adjustbox}
\usepackage{amsmath}
\usepackage{amssymb} 
\usepackage{booktabs}

\usepackage{subcaption} 

\usepackage[misc,geometry]{ifsym} 

\usepackage{pifont} 

\ifdefined\VersionWithComments
	\usepackage[colorinlistoftodos,textsize=footnotesize]{todonotes}
\else
	\usepackage[disable]{todonotes}
\fi
\newcommand{\gennote}[3]{\todo[size=\scriptsize,linecolor=#2,backgroundcolor=#2!25,bordercolor=#2]{#3: #1}}  

\newcommand{\sd}[1]{{\gennote{#1}{green}{SD}}}

\usepackage{tikz}
\usetikzlibrary{arrows,automata,positioning,shapes} 
\usetikzlibrary{decorations} 
\usetikzlibrary{decorations.pathreplacing} 
\usetikzlibrary{calc} 

\tikzstyle{pta}=[auto, ->, >=stealth', /tikz/initial distance=10pt]
\tikzstyle{every node}=[initial text=]
\tikzstyle{RA-location}=[rectangle, rounded corners, minimum size=12pt, draw=black, fill=blue!10, inner sep=2pt]
\tikzstyle{location}=[rounded rectangle, minimum size=12pt, draw=black, fill=blue!10, inner sep=2pt]
\tikzstyle{invariant}=[draw=black, dotted, inner sep=1pt, node distance=0] 
\tikzstyle{final}=[double, fill=blueColorBlind!40]

\tikzstyle{private}=[fill=redColorBlind!50,thick]

\definecolor{coloract}{rgb}{0, 0.3, 0}
\definecolor{colorcact}{rgb}{0.6, 0.5, 0}
\definecolor{colorclock}{rgb}{0.4, 0, 0}
\definecolor{colorloc}{rgb}{0.1, 0.1, 0.1}

\definecolor{blueColorBlind} {RGB}{68 ,119,170}
\definecolor{greenColorBlind}{RGB}{34 ,136,51 }
\definecolor{redColorBlind}  {RGB}{238,102,119}

\ifdefined\VersionWithComments
\usepackage{everypage}

\AddEverypageHook{%
	\begin{tikzpicture}[remember picture, overlay]
		\node[
		align=center,
		anchor=north east,
		xshift=-1cm,
		yshift=-1cm,
		text=red
		] at (current page.north east) {\scalebox{2}{\bfseries\LARGE Draft!}\\{\tiny\today{}}};
	\end{tikzpicture}
}

\fi

\usepackage{graphicx}
\definecolor{LSdeuxNrichblack}{HTML}{020A13}
\definecolor{LSdeuxNbazul}{HTML}{1273CE}
\definecolor{LSdeuxNcornflower}{HTML}{5297FF}
\definecolor{LSdeuxNAntiflashwhite}{HTML}{F8FBFF}

\usepackage{paralist} 

\newenvironment{ienumerate}
	{\ifdefined\VersionLong\begin{enumerate}\else\begin{inparaenum}[\itshape i\upshape)]\fi}
	{\ifdefined\VersionLong\end{enumerate}\else\end{inparaenum}\fi}

\usepackage[pagewise]{lineno} 

\theoremstyle{plain}

\newtheorem{theorem}{Theorem}

\theoremstyle{definition}
\newtheorem{definition}{Definition}
\newtheorem{example}{Example}

\theoremstyle{remark}

\usepackage[capitalise,english,nameinlink]{cleveref} 
\crefname{line}{\text{line}}{\text{lines}} 

\newcommand{\defProblem}[3]
{%
	\noindent\fcolorbox{LSdeuxNrichblack}{LSdeuxNAntiflashwhite}{
	\begin{minipage}{.95\columnwidth}
		\textbf{#1 problem:}\\
		\textsc{Input}: #2\\
		\textsc{Problem}: #3
	\end{minipage}
}

	\smallskip

}

\newif\ifinfigure
\let\infigure\iffalse 

\let\oldtikzpicture\tikzpicture
\let\endoldtikzpicture\endtikzpicture
\renewenvironment{tikzpicture}
  {\let\ifinfigure\iftrue\oldtikzpicture}
  {\endoldtikzpicture\let\ifinfigure\iffalse}

\newcommand{\styleact}[1]{\ensuremath{
	\ifinfigure%
		\textcolor{coloract}{{#1}}%
	\else%
		#1%
	\fi%
	}%
}

\newcommand{\stylecact}[1]{\ensuremath{
	\ifinfigure%
		\textcolor{colorcact}{{#1}}%
	\else%
		#1%
	\fi%
	}%
}

\newcommand{\styleclock}[1]{\ensuremath{
	\ifinfigure%
		\textcolor{colorclock}{{#1}}%
	\else%
		#1%
	\fi%
	}%
}

\newcommand{\styleloc}[1]{\ensuremath{
	\ifinfigure%
		\textcolor{colorloc}{{#1}}%
	\else%
		#1%
	\fi%
	}%
}

\newcommand{\assign}{\leftarrow}

\newcommand{\checkUseMacro}[1]{#1}

\newcommand{\set}[1]{\ensuremath{\left\{#1\right\}}}

\newcommand{\setN}{\ensuremath{\mathbb{N}}}

\newcommand{\setR}{\ensuremath{\mathbb{R}}}
\newcommand{\setRgeqzero}{\ensuremath{\setR_{\geq 0}}}

\newcommand{\compOp}{\bowtie}

\newcommand{\init}{\ensuremath{0}}
\newcommand{\priv}{\ensuremath{{\mathit{priv}}}}

\newcommand{\final}{\ensuremath{f}}

\newcommand{\styleAutomaton}[1]{\ensuremath{\mathcal{#1}}}

\newcommand{\clock}{\ensuremath{\styleclock{x}}}

\newcommand{\clocki}[1]{\ensuremath{\styleclock{\clock_{#1}}}}
\newcommand{\ClockCard}{H} 
\newcommand{\clockval}{\ensuremath{\mu}}
\newcommand{\ClockSet}{\ensuremath{\mathbb{X}}} 
\newcommand{\ClocksZero}{\ensuremath{\vec{0}}}

\newcommand{\resets}{\ensuremath{R}}
\newcommand{\reset}[2]{\ensuremath{[#1]_{#2}}}

\newcommand{\TA}{\ensuremath{\checkUseMacro{\styleAutomaton{A}}}}
\newcommand{\TB}{\ensuremath{\checkUseMacro{\styleAutomaton{B}}}} 

\newcommand{\actionu}{\ensuremath{\styleact{u}}}
\newcommand{\caction}{\ensuremath{\stylecact{\kappa}}} 
\newcommand{\cactioni}[1]{\ensuremath{\stylecact{\caction_{#1}}}}
\newcommand{\ActionSet}{\ensuremath{\Sigma}}

\newcommand{\extraAction}{\ensuremath{\styleact{\sharp}}}

\newcommand{\constraint}{\ensuremath{C}}

\newcommand{\edge}{\ensuremath{\checkUseMacro{e}}}
\newcommand{\edgei}[1]{\ensuremath{\checkUseMacro{\edge_{#1}}}}
\newcommand{\EdgeSet}{\ensuremath{E}}

\newcommand{\guard}{\ensuremath{g}}

\newcommand{\invariant}{\ensuremath{I}} 

\newcommand{\loc}{\ensuremath{\styleloc{\ell}}}
\newcommand{\loci}[1]{\ensuremath{\styleloc{\loc_{#1}}}}
\newcommand{\locinit}{\ensuremath{\styleloc{\loc_\init}}}
\newcommand{\locfinal}{\ensuremath{\styleloc{\loc_\final}}}
\newcommand{\FinalSet}{\ensuremath{L_f}}
\newcommand{\PrivSet}{\ensuremath{L_{\mathit{priv}}}}
\newcommand{\locpriv}{\ensuremath{\styleloc{\loc_\priv}}}

\newcommand{\LocSet}{\ensuremath{L}}

\newcommand{\longuefleche}[1]{\stackrel{#1}{\longrightarrow}}

\newcommand{\run}{\checkUseMacro{\rho}} 

\newcommand{\duration}{\ensuremath{\mathit{dur}}} 
\newcommand{\runduration}[1]{\ensuremath{\checkUseMacro{\duration}(#1)}}

\newcommand{\Dpriv}{\ensuremath{\mathcal{D}_\mathit{priv}}}
\newcommand{\Dpub}{\ensuremath{\mathcal{D}_\mathit{pub}}}

\newcommand{\strategy}{\ensuremath{\sigma}}

\newcommand{\TAcontrolextend}{\ensuremath{\left(\ControlSet, \LocSet, \locinit, \PrivSet, \FinalSet, \ClockSet, \invariant, \EdgeSet\right)}}

\newcommand{\ControlSet}{\ensuremath{\ActionSet_c}}

\newcommand{\semantics}[1]{\ensuremath{\mathfrak{T}_{#1}}}

\newcommand{\semanticscontrolextend}{\ensuremath{\left(\StateSet, \concstateinit, \ControlSet\cup\{\uncontrol\} \cup \setRgeqzero, \transition\right)}}

\newcommand{\transition}{{\ensuremath{\rightarrow}}}

\newcommand{\StateSet}{\ensuremath{\mathfrak{S}}}
\newcommand{\concstate}{\ensuremath{\mathfrak{s}}}

\newcommand{\concstateinit}{\ensuremath{\concstate_\init}}

\newcommand{\PrivVisit}{\ensuremath{\mathit{Visit}_{\mathit{priv}}}}
\newcommand{\PubVisit}{\ensuremath{\mathit{Visit}_{\mathit{pub}}}}

\newcommand{\Parts}[1]{\ensuremath{\mathcal{P}(#1)}}
\newcommand{\AllRuns}[2]{\ensuremath{\Omega}^{#2}(#1)}

\newcommand{\APriv}{\ensuremath{\TA_{\mathit{priv}}}}
\newcommand{\APub}{\ensuremath{\TA_{\mathit{pub}}}}

\newcommand{\interval}[2]{\ensuremath{[\![#1 ; #2]\!]}}

\newcommand{\TimeRegions}{\ensuremath{\mathit{TI}}}

\newcommand{\substrat}{\ensuremath{\nu}}

\newcommand{\op}{\ensuremath{(+, \cdot, ^2, \star)}}
\newcommand{\regexp}{\ensuremath{\mathit{exp}}}

\newcommand{\uncontrol}{\ensuremath{\stylecact{u}}}
\newcommand{\lab}[1]{\ensuremath{\mathit{label}(#1)}}

\newcommand{\ComplexityFont}[1]{{\sffamily\upshape #1}}

\newcommand{\NEXPTIME}{\ComplexityFont{NEXPTIME}\xspace}
\newcommand{\coNEXPTIME}{\ComplexityFont{Co-NEXPTIME}\xspace}

\newcommand{\NLOGSPACE}{\ComplexityFont{NLOGSPACE}\xspace}
\newcommand{\EXPSPACE}{\ComplexityFont{EXPSPACE}\xspace}

\newcommand{\fiveEXPTIME}{\ComplexityFont{5EXPTIME}\xspace}
\newcommand{\NP}{\ComplexityFont{NP}\xspace}
\newcommand{\coNP}{\ComplexityFont{Co-NP}\xspace}

\newcommand{\ie}{i.e.,\xspace}
\newcommand{\st}{s.t.}

\title{
Execution-Time Opacity in (Controllable) Timed Automata is Hard\thanks{
	This work, as well as the long version~\cite{DL26}, is partially supported by ANR BisoUS (ANR-22-CE48-0012) and ANR GUMMIS 
	(ANR-25-CE48-5096) .
}
}
\author{Sarah Dépernet\orcidID{0009-0003-8710-7934}
\qquad\qquad Engel Lefaucheux\homepage{https://elefauch.github.io/}%
	\orcidID{0000-0003-0875-300X}
\institute{
Université de Lorraine, CNRS, Inria, LORIA, F-54000 Nancy, France
}
}
\def\titlerunning{ET-Opacity in (Controllable) TA is Hard}
\def\authorrunning{S.\ Dépernet and E.\ Lefaucheux}

\begin{document}

\sloppy

\pagestyle{plain}

\maketitle

\begin{abstract}
Timing information leakage occurs whenever an attacker successfully deduces confidential 
internal information by observing some timed information such as the total duration of a 
run.
Timed automata (TAs)
are an extension of finite-state automata with a set of clocks evolving at the same rate and that can be tested or reset. Hence, this formalism is able to represent systems involving timing constraints reacting to the passage of time.\sd{Remarque du reviewer 2 appliquée} 
Some of the transitions in a timed automaton may be controlled, i.e., enabled or disabled, to limit the runs that are allowed.

Execution-time opacity (ET-opacity) aims at studying timing information leakage through the analysis of timed automata.
More precisely, the ET-opacity problem consists in deciding whether an 
attacker can deduce whether some secret locations were visited by a run, solely from the run's duration.
When control is allowed, the goal becomes to find a strategy that blocks the runs leaking information.
This problem has been shown decidable for TAs, with and without control.
However, these existing results only provide upper bounds on the complexity of 
the ET-opacity problem, sometimes largely above the tight bound.
Our main contribution is to provide tight lower bounds for those two results.
Our proofs rely on novel constructions.

Moreover, for controllable TAs, two variants of ET-opacity (depending on whether detecting if a run did \emph{not} visit a secret location represents an information leakage or not, and named full and weak ET-opacity respectively) had been separately studied; we prove here that these two notions are inter-reducible.
\end{abstract}

\section{Introduction}

Consider a cryptographic server that receives a request, performs some secret computation, and replies. An adversary who cannot read the internal state of the server, cannot see which branch of the code was taken, which key was used, which data was accessed, may nonetheless learn something simply by measuring how long the server takes to respond. This is the essence of a \emph{timing side-channel attack}: confidential information leaks not through the content of an output, but through its timing.
Timed automata (TAs)~\cite{AD94} offer a natural formalism for modelling such
real-time systems. They extend finite automata with a finite set of real-valued clocks, all evolving at the same rate, which can be tested against integer thresholds and reset. This allows one to faithfully represent systems subject to timing constraints, and to reason about the time elapsed during a computation.
Within this setting, \emph{opacity} formalises the requirement that an external
attacker, armed with a given observation power, cannot infer whether a designated
secret was involved in a run.

Depending on what the attacker can observe, a range of opacity notions were
studied with considerable variations in their decidability status.
Cassez' seminal work~\cite{Cassez09} showed that opacity is undecidable in dense time
when the attacker can observe the full timed trace produced by a run, even for
restricted subclasses of TAs such as event-recording automata~\cite{AFH99}.
This motivates restricting either the model or the observation power.
Along the first direction, decidability has been recovered for real-time
automata~\cite{WZ18,WZA18}, for constant-time labelled automata~\cite{LLHL22}, and for
labelled real-timed automata~\cite{Zhang24},
but also for some TAs with one-clock~\cite{ADL24} or to a discrete-time semantics~\cite{AGWZH24,ADL24,KKG24}.
Time-bounded variants, where the attacker must disclose the secret before a
deadline~\cite{AEYM21}, and memory-bounded attackers~\cite{SLR23} have also been
considered.

Along the second direction, opacity was considered when the attacker has a limited
accuracy~\cite{DQY26}, or access to only part of the observation~\cite{ALMS22,ADL24}.
We follow this latter direction, assuming the
attacker has access solely to the \emph{total execution time} of a run (the aggregate
duration from start to termination) and nothing else.
This \emph{execution-time opacity} (ET-opacity) framework was introduced
in~\cite{ALMS22} and later developed in~\cite{ALM23,ALLMS23}.

In this setting, a TA is weakly ET-opaque if every duration achievable by a
private run (one visiting a secret location) is also achievable by some public run:
the attacker cannot conclude from the duration alone that a secret was visited.
Full ET-opacity additionally requires the symmetric inclusion: every public duration is
also a private one, so the attacker cannot conclude that a secret was \emph{not}
visited either.
In stark contrast to Cassez' setting, most ET-opacity problems for TAs are decidable~\cite{ALLMS23}.

In~\cite{ABLM22,ADLL26}, the ET-opacity framework was extended with \emph{control}: some transitions of the TA are governed by a \emph{sequential strategy} that dynamically selects which controllable actions are available at each moment. 
The resulting controller is deliberately based on time alone: rather than observing the current location or clock valuation and reacting to the internal state of the system, it specifies the availability of controllable actions as a function of global time. This can be viewed as an external scheduler that regulates when certain operations may occur, without access to the internal execution whose timing may reveal the secret. In other words, the controller does not require the system to leak information to it, which may be intercepted by an attacker.

Intuitively, a sequential strategy divides time into consecutive intervals of a fixed granularity, represented here by unit intervals, and within each interval prescribes an ordered agenda of sets of permitted actions. The strategy therefore specifies the succession of control decisions, while their exact timing is left to an implementation. In this sense, sequential strategies provide a finite and structured description of time-based control, without requiring the strategy itself to specify arbitrary real-valued switching times. This restricted form is also essential from an algorithmic perspective:~\cite{ADLL26} showed that if strategies were allowed to be fully time-accurate, \emph{i.e.}, to enable actions at arbitrary real-valued timestamps, the control problem becomes undecidable. The sequential structure thus identifies a restricted, yet expressive, class of schedulers for which the synthesis problem remains tractable.

In the controllable framework, the question shifts from checking whether a given TA is ET-opaque to deciding whether \emph{there exists} a strategy that makes it so. This is highly relevant in practice: it models whether a real-time system can be \emph{scheduled} so as to prevent timing side-channel leakage. For instance, one may ask whether a server can be instrumented, by controlling the availability or ordering of operations, so that no execution time reveals whether a secret branch was taken.

For uncontrolled TAs, ET-opacity was first shown decidable in \fiveEXPTIME~\cite{ALMS22}, before a more careful analysis brought the upper bound down to \NEXPTIME{}~\cite{ALM23} (this result failed to take a negation into account however, their algorithm is in fact in \coNEXPTIME).
For controlled TAs, the existence of a sequential strategy ensuring ET-opacity was shown to lie in \EXPSPACE{}~\cite{ADLL26}, via a reduction to a one-player game on a doubly exponential region automaton. In both cases, however, \emph{no matching lower bounds} were established, leaving open whether these complexities are tight.
This article closes both gaps. 
Our contributions are threefold.

\begin{enumerate}
  \item \textbf{Inter-reducibility of weak and full ET-opacity under control.}
In the uncontrolled setting, weak and full ET-opacity are inter-reducible~\cite{ADL24}: one can construct from a TA $\TA$ another TA $\TA'$ with private and public runs swapped, so that $\TA$ is fully opaque if and only if both $\TA$ and $\TA'$ are weakly opaque. 

This argument breaks down under control: nothing guarantees that the two TAs can be made weakly opaque by a single strategy. We show, perhaps surprisingly, that inter-reducibility nonetheless holds in the controlled setting (\cref{prop:weak-full}), via a more subtle construction that internalises both inclusion checks within a single automaton. As a consequence, the complexity of weak and full ET-opacity coincides under control, and hardness results transfer freely between them.

  \item \textbf{\coNEXPTIME-completeness of ET-opacity (without control).}
    We prove that both weak and full ET-opacity for TAs are \coNEXPTIME-complete
    (\cref{ET-opac-unc}), matching (the correction of) the upper bound of~\cite{ALM23}.
    The hardness reduction builds on a known \NEXPTIME-hard fragment of parametric
    TCTL (Timed Computation Tree Logic) model-checking~\cite{QSW17} and encodes the negation of the relevant synchronisation property
    into ET-opacity.\sd{Remarque du reviewer 2 appliquée}

  \item \textbf{\EXPSPACE-completeness of ET-opacity control.}
    We prove that the problem of deciding whether a sequential strategy making a TA
    weakly (or fully) ET-opaque exists is \EXPSPACE-complete (\cref{th:ET-opac-comp}),
    matching the upper bound of~\cite{ADLL26}.
    The hardness reduction proceeds from the \EXPSPACE-complete inequivalence problem
    for regular expressions with squaring~\cite{SM73}, relying on the ability to produce 
    a doubly exponential countdown gadget.
\end{enumerate}
Full proofs are available in the long version~\cite{DL26}.

\section{Preliminaries}\label{sec:preliminaries} 

We denote by $\setN$ and $\setRgeqzero$ the sets of non-negative integers and non-negative reals, respectively.
If $a$ and~$b$ are two integers with $a \leq b$, the set $\set{a, a+1, \dots, b-1, b}$ is denoted by $\interval{a}{b}$.
\emph{Clocks} are real-valued variables that all evolve over time at the same rate.
Throughout this paper, we assume a set~$\ClockSet = \{ \clocki{1}, \dots, \clocki{\ClockCard} \} $ of clocks.\sd{Remarque du reviewer 3 appliquée}
A \emph{clock valuation} is a function
$\clockval : \ClockSet \rightarrow \setRgeqzero$, assigning a non-negative value to each clock.
We write $\ClocksZero$ for the clock valuation assigning $0$ to all clocks.
Given a constant $d \in \setRgeqzero$, $\clockval + d$ denotes the valuation \st\ $(\clockval + d)(\clock) = \clockval(\clock) + d$, for all $\clock \in \ClockSet$. If $R$ is a subset of $\ClockSet$ and $\clockval$ a clock valuation, we call \emph{reset} of the clocks of $\resets$ and denote by $[\clockval]_R$ the valuation \st\ for all clock $x \in \ClockSet$, $\reset{\clockval}{\resets}(x) = 0$ if $x \in \resets$ and $\reset{\clockval}{\resets}(x) = \clockval(x)$ otherwise.
A \emph{constraint}~$\constraint$ is a conjunction of inequalities over~$\ClockSet $ of the form
$\clock \compOp d$, with
${\compOp} \in \{<, \leq, =, \geq, >\}$
and
$d \in \setN$.
Given~$\constraint$, we write~$\clockval \models \constraint$ if the expression obtained by replacing each~$\clock$ with~$\clockval(\clock)$ in~$\constraint$ evaluates to true.
We denote $\top$ the constraint that is always true.

\subsection{Timed automata}\label{sec:TA}

Timed automata (TAs) extend finite automata with a finite set of clocks with values in~$\setRgeqzero$.
We extend standard TAs with
\begin{ienumerate}%
	\item a set of private locations (needed to model the secret in our subsequent opacity notions), and
	\item a set of controllable events. 
\end{ienumerate}%

\LongVersion{\begin{definition}[Timed automaton (TA)]}
\ShortVersion{\begin{definition}}
\label{def:TA}
	A TA~$\TA$ is a tuple \mbox{$\TA = \TAcontrolextend$}, where:
	\begin{enumerate}%
		\item $\ControlSet$ is a finite set of controllable actions;
		\item $\LocSet$ is a finite set of locations,
		\LongVersion{\item }$\locinit \in \LocSet$ is the initial location,
		\item $\PrivSet \subseteq \LocSet$ is a set of private locations,
		\item $\FinalSet \subseteq \LocSet$ is a set of final locations,
		\item $\ClockSet$ is a finite set of clocks,
		\item $\invariant$ is the invariant, assigning to every $\loc\in \LocSet$ a constraint $\invariant(\loc)$ and such that $\ClocksZero~\models~\invariant(\locinit)$,
		\item $\EdgeSet$ is a finite set of edges  $\edge = (\loc,\guard,\caction,\resets,\loc')$
		where~$\loc,\loc'\in \LocSet$ are the source and target locations, 
		$\caction \in \ControlSet \cup \{ \uncontrol \}$ (where $\uncontrol$ denotes an uncontrollable action),
		$\resets\subseteq \ClockSet$ is a set of clocks to be reset, and $\guard$ is a constraint 
			(called \emph{guard}).
	\end{enumerate}%
\end{definition}

\begin{figure}[b]
\begin{minipage}{\textwidth/2}
	\centering
	\normalsize

	\begin{tikzpicture}[pta, scale=2, xscale=1, yscale=.5]

		\node[location, initial] at (0, -1) (s0) {$\loci{0}$};

		\node[location, private] at (1, 0) (s2) {$\loci{2}$};

		\node[location, final] at (2, -1) (s1) {$\loci{1}$};

		\node[invariant, below=of s2] {$\styleclock{\clock} < 1$};

		\path (s0) edge[bend left] node[below right, align=center]{$\cactioni{1}$} (s2);
		\path (s0) edge[loop above] node[above, align=center]{$\styleclock{\clock} = 2$\\$\styleclock{\clock} \assign 0$\\$\cactioni{2}$} (s0);
		\path (s0) edge[] node[below, align=center]{$\uncontrol$\\$\styleclock{\clock} \leq 1$} (s1);
		\path (s2) edge[bend left] node[above, align=center]{$\styleclock{\clock}\leq 1$\\$\cactioni{2}$} (s1);

	\end{tikzpicture}
	\captionof{figure}{A TA example}
	\label{figure:example-TA}
\end{minipage}
\begin{minipage}{\textwidth/2}
	\begin{tikzpicture}[x=4.5cm, y=1cm]
	\def\t{0.45}; 
	\def\a{\t / 2}; 
	\def\b{.5 + \a}; 
	\def\colorA{blue!80!black};
	\def\colorB{green!50!black};
	\def\tcolorA{\textcolor{\colorA}};
	\def\tcolorB{\textcolor{\colorB}};

	\draw[->] (-0.1,0) -- (1.1,0) node[right] {$t$};

	\draw[very thick] (0, 0.2) -- (0, -0.2);
	\node[below] at (0, -0.2) {$n$};
	\draw[very thick] (1, 0.2) -- (1, -0.2);
	\node[below] at (1, -0.2) {$n+1$};
	
	\large	
	\node at (0.01,0) {\textbf{\tcolorA{(}}};
	\node at (\t,0) {\textbf{\tcolorA{)}}};
	\draw[thick, color=\colorA] (0,0) -- (\t, 0);
	
	\node at (\t + 0.01,0) {\textbf{\tcolorB{[}}};
	\node at (0.99,0) {\textbf{\tcolorB{)}}};
	\draw[thick, color=\colorB] (\t,0) -- (1, 0);

	
	\normalsize
	\node[above] at (0,1) {$\ActionSet_{n,0} = \emptyset$};
	\node[above] at (\a,0.1) {\tcolorA{$\ActionSet_{n,1} = \set{\cactioni{1}}$}};
	\node[above] at (\b,0.1) {\tcolorB{$\ActionSet_{n,2} = \set{\cactioni{2}}$}};
	\node[above] at (1,1) {$\ActionSet_{n+1,0} = \emptyset$};
	\node[below] at (\t,-0.15) {$t_1 = n + $\t};
	
	\draw[dashed] (0,0) -- (0,1);
	\draw[dashed] (1,0) -- (1,1);
	\end{tikzpicture}
	\captionof{figure}{An implementation of the \\strategy $\strategy_1$ from $\cref{exa:strategy}$}
	\label{figure:implementation}
\end{minipage}
\end{figure}

\begin{example}
	In \cref{figure:example-TA}, we give an example of a TA with three locations $\loci{0}$, $\loci{1}$ and~$\loci{2}$,
	four edges,
	two controllable actions~$\{ \cactioni{1}, \cactioni{2} \}$,
	and one clock $\clock$.
	The location $\loci{0}$ is initial, $\loci{2}$ is the (unique) private location, and~$\loci{1}$ is the (unique) final location.
The location $\loci{2}$ has an invariant ``$\styleclock{\clock} < 1$''. The edge from $\loci{0}$ to $\loci{1}$ is labelled by the uncontrollable action~$\uncontrol$, and has a guard ``$\styleclock{\clock} \leq 1$''. The self-loop of $\loci{0}$, labelled by the controllable action~$\cactioni{2}$, resets the clock $\styleclock{\clock}$ and has a guard $\styleclock{\clock} = 2$.
\end{example}

\begin{definition}[Semantics of a TA]\label{def:semantics}
	Given a TA $\TA = \TAcontrolextend$,
	the semantics of~$\TA$ is given by the timed transition system $\semantics{\TA} = \semanticscontrolextend$, with
	\begin{enumerate}
		\item $\StateSet = \big\{ (\loc, \clockval) \in \LocSet \times \setRgeqzero^\ClockSet \mid \clockval \models \invariant(\loc) \big\}$,
		$\concstateinit = (\locinit, \ClocksZero) $,
		\item  $\transition \subseteq \StateSet \times \EdgeSet \times \StateSet \cup \StateSet \times \setRgeqzero \times \StateSet$ consists of the discrete and (continuous) delay transition relations:
		\begin{enumerate}
			\item discrete transitions:
$\big((\loc,\clockval), \edge, (\loc',\clockval') \big) \in \transition$,
		if $(\loc, \clockval) , (\loc',\clockval') \in \StateSet$, $\edge = (\loc,\guard,\caction,\resets,\loc') \in \EdgeSet$, $\clockval'= \reset{\clockval}{\resets}$, and $\clockval\models \guard$.
			\item delay transitions:
$\big((\loc,\clockval), d, (\loc,\clockval +d)\big) \in \transition$,
		if $d \in \setRgeqzero$ and
$\forall d' \in [0, d], (\loc, \clockval+d') \in \StateSet$.
		\end{enumerate}
	\end{enumerate}
\end{definition}

Moreover we write $(\loc, \clockval)\longuefleche{(d, \edge)} (\loc',\clockval')$ for a combination of a delay and discrete transition if
$\exists  \clockval'' :  \big( (\loc,\clockval), d, (\loc,\clockval'') \big) \in \transition$
and
$\big( (\loc,\clockval'') , \edge,  (\loc',\clockval') \big) \in \transition$. 

Given a TA~$\TA$ with semantics $\semanticscontrolextend$, we refer to the elements of~$\StateSet$ as the \emph{configurations} of~$\TA$.
A (finite) \emph{run} of~$\TA$ is an alternating sequence of configurations of~$\TA$ and pairs of delays and edges starting from the initial configuration $\concstateinit$ and ending in a final configuration (\ie{} whose location is final),
of the form
$(\loci{0}, \clockval_{0}), (d_0, \edge_0), (\loci{1}, \clockval_{1}), \ldots (\loci{n}, \clockval_{n})$ for some $n \in \setN$,
with
$\loci{n} \in \FinalSet$ and for $i = 0, 1, \dots n-1$, $\loci{i} \notin \FinalSet$, $\edge_i \in \EdgeSet$, $d_i \in \setRgeqzero, $ and
$(\loci{i}, \clockval_{i}) \longuefleche{(d_i, \edge_i)} (\loci{i+1}, \clockval_{i+1})$.
We denote by $\AllRuns{\TA}{}$ the set of runs of~$\TA$.
A \emph{path} of~$\TA$ is an infix of a run beginning with a configuration and ending with a configuration.\sd{Remarque du reviewer 3 appliquée}

Given a TA~$\TA$ and a run~$\run=(\loci{0}, \clockval_0), (d_0, \edgei{0}), (\loci{1}, \clockval_1), \ldots, (\loci{n}, \clockval_n)$ of~$\TA$ with $(\locinit, \clockval_0)$ the initial configuration, we say that $\run$ \emph{visits} the location $\loc \in \LocSet$ if there exists~$m \in \interval{0}{n}$ such that $\loci{m} = \loc$. If $\run$ visits a private location, \ie{} a location in $\PrivSet$, we say it is a \emph{private run}, and otherwise it is a \emph{public run}.
We denote by~$\PrivVisit(\TA)$ the set of private runs, and by~$\PubVisit(\TA)$ the set of public runs.

In our setting and concerning a run $\run$, an attacker may have access to the total execution time $\duration(\run)$, which is the sum of all delays appearing in the run.
This notation is extended to sets of runs.\sd{Remarque du reviewer 3 appliquée}

\subsection{Control}\label{sec:def-strategies}

The ET-opacity framework can be extended with the notion of control: some transitions of the TA, those labelled by controllable actions in $\ControlSet$, can be enabled or disabled at runtime in order to enforce opacity. This is modelled by a \emph{sequential strategy} (first introduced in \cite{ADLL26} under the name ``meta-strategy''), which prescribes, for each time interval, an ordered list of sets of controllable actions that are made available one after another within that time interval\sd{Remarque du reviewer 3 appliquée}. Crucially, the strategy commits to this agenda based solely on the current time interval, without observing the internal state of the system.

Formally, we call \emph{time intervals} the intervals of the form $\set{n}$ or $(n;n+1)$ where $n \in \setN$, and denote their set $\TimeRegions$. 
A sequential strategy is a function mapping time intervals to a finite sequence from 
$\Parts{\ControlSet}^*$,
such that if the time interval is a singleton, then the sequence returned by the strategy contains only one element:

\[
\begin{array}{rrcl}
	\strategy{}: & \TimeRegions  & \longrightarrow & \mathcal{P}(\ControlSet)^* \\
	
	& \set{n} & \longmapsto & \ActionSet_{n,0} \\
	& (n; n+1) & \longmapsto & \ActionSet_{n,1} \ActionSet_{n,2} \dots \ActionSet_{n,K_n}
 
\end{array}
\]

Limiting the sequence to a single element over a singleton represents the inability for a 
system to change its behaviour in zero time. 
The actual behaviour of a sequential strategy is described by an implementation of it:
an \emph{implementation of a sequential strategy $\strategy$} refines $\strategy$ with exact timestamps.
That is, for each time interval $I =(n;n+1)$,
given the sequence $\strategy(I)=\ActionSet_{1} \ActionSet_{2} \dots \ActionSet_{K_n}$ with $K_n \geq 1$,
an implementation $\substrat$ produces a timed sequence $(\ActionSet_{1}, t_{1}) (\ActionSet_{2}, t_{2}) \dots (\ActionSet_{K_n}, t_{K_n})$ with $(t_{ k})_{k \in \interval{1}{K_n}}$ an increasing sequence with values in~$I$, $t_{K_n}=n+1$. Intuitively, the implementation $\substrat$ enables the controllable
actions in $\ActionSet_{i}$ within the interval $[t_{i-1};t_{i})$ if $i > 1$, and $(n; t_1)$ if $i = 1$.
Formally, given~$\substrat$ some implementation of a strategy~$\strategy$, $t \in \setRgeqzero$ and~$I$ time interval containing $t$, let $\substrat(I)=(\ActionSet_{1}, t_{1}) (\ActionSet_{2}, t_{2}) \dots (\ActionSet_{K_n}, t_{K_n})$,
the set of control labels \emph{enabled by~$\substrat$ at time~$t$} is
\[ E_{\substrat}(t) = \ActionSet_{i} \cup \set{\uncontrol} \; \text{ with } i = \min
 \set{k \mid t \leq t_{k} }.\]

The runs of $\TA$ whose transitions are all enabled by an implementation~$\substrat$ of the strategy~$\strategy$ are called \emph{$\strategy$-compatible} runs.
Their set is denoted $\AllRuns{\TA}{\strategy}$. We use the notation $\lab{e}$ to denote the control label of the edge $e$.
Formally:
\begin{align*}
\AllRuns{\TA}{\strategy} &= \lbrace \run \in \AllRuns{\TA}{} \mid \run = (\locinit, \clockval_{0})(d_0,\edgei{0})\dots (d_{n-1},\edgei{n-1})(\loci{n}, \clockval_{n}):
\exists \substrat \text{ implementing }\\
& \strategy \text{ such that } 
 \forall 0\leq i< n, \lab{\edgei{i}} \text{ is enabled }\text{by } \substrat \text{ at time } \sum_{k=0}^i d_k
\rbrace
\end{align*}


\begin{example}\label{exa:strategy}
Consider the TA from \cref{figure:example-TA} and let $\strategy_1$ be the strategy such that for all $n \in \setN$, $\strategy_1( (n; n+1) ) = \set{\cactioni{1}} \set{\cactioni{2}}$ and $\strategy_1( \set{n} ) = \emptyset$. 
Intuitively, within each unit interval, $\kappa_1$
is available first and $\kappa_2$ second, while no controllable action is available at integer times. 
As a consequence, the self-loop on $\ell_0$, which requires $\kappa_2$ at time $x=2$, an integer, is not enabled.
The only $\strategy_1$-compatible runs are those that take the uncontrollable transition at a time at most 1, and those that go through $\loci{2}$ at a time in $(0;1)$, let some positive time elapse, and then reach $\loci{1}$ strictly before the time~$1$. 
A possible implementation of $\strategy_1$ 
is represented on \cref{figure:implementation}.
\end{example}
%
%

\subsection{ET-Opacity of a controlled TA}\label{sec:winning-condition}

We first define ET-opacity for an uncontrolled TA, before extending to the controlled setting. 
We denote $\Dpriv(\TA) = \duration(\PrivVisit(\TA))$ the set of durations of private runs of~$\TA$, and $\Dpub(\TA) = \duration(\PubVisit(\TA))$ for the set of durations of public runs of~$\TA$.
ET-opacity formalises the requirement that whatever the attacker observes (here, the total execution time of a run) they cannot determine whether a private location was visited.\sd{Typo corrigée}
We say that a duration is \emph{ET-opaque} if it can be obtained from both a public and a private run, \ie{} whenever it is in $\Dpriv(\TA) \cap \Dpub(\TA)$.
We say a timed automaton $\TA$ is \emph{weakly ET-opaque} if $\Dpriv(\TA) \subseteq \Dpub(\TA)$ (every private duration\sd{Typo corrigée} is also public) and \emph{fully ET-opaque} if $\Dpriv(\TA) =  \Dpub(\TA)$ (the durations of every run of $\TA$ are ET-opaque). We also occasionally use the term ET-opaque for isolated durations.

\medskip 

\defProblem{Full (resp.\ weak) ET-opacity of a TA}{A TA $\TA$}{Is $\TA$ fully ET-opaque (resp.\ weakly ET-opaque)?}

\medskip

These definitions extends naturally to the controlled setting.
Given a sequential strategy $\strategy$, we denote by~$\PrivVisit^\strategy(\TA)$ the set of private runs in $\AllRuns{\TA}{\strategy}$, and by~$\PubVisit^\strategy(\TA)$ the set of public runs in $\AllRuns{\TA}{\strategy}$, and set 
$\Dpriv^\strategy(\TA) = \duration(\PrivVisit^\strategy(\TA))$ and $\Dpub^\strategy(\TA) = \duration(\PubVisit^\strategy(\TA))$.
A timed automaton $\TA$ is then \emph{weakly ET-opaque under $\strategy$} if $\Dpriv^{\strategy}(\TA) \subseteq \Dpub^\strategy(\TA)$ and \emph{fully ET-opaque under $\strategy$} if $\Dpriv^\strategy(\TA) =  \Dpub^\strategy(\TA)$.
In this case, we say that $\strategy$ makes $\TA$ weakly (resp. fully) ET-opaque, or that $\strategy$ is \emph{opacity-ensuring} for $\TA$.

\medskip 

\defProblem{Full (resp.\ weak) ET-opacity of a controlled TA}{A TA $\TA$}{Does there exist a sequential strategy making $\TA$ fully ET-opaque (resp.\ weakly ET-opaque)?}

\medskip

\begin{example}\label{exa:strategy-weak}
Consider the TA $\TA$ from \cref{figure:example-TA} under the sequential strategy
$\strategy_1$ defined in \cref{exa:strategy}.
Consider the duration $0.5$ for instance and the implementation $\nu$ of $\strategy_1$
(illustrated on \cref{figure:implementation}) such that for all $n \in \setN$, $\nu$ switches from $\set{\cactioni{1}}$ to $\set{\cactioni{2}}$ at the time $n + 0.45$ (\emph{i.e.} for all $n$, $\nu((n;n+1)) = (\set{\cactioni{1}}, n+0.45)(\set{\cactioni{2}},n+1)$ and $\nu(\set{n}) = \emptyset$).
The transitions of the private run which go from $\loc_0$ to $\loc_2$ at time $0.4$ and then to $\loc_1$ at time $0.5$ are enabled by $\nu$. Thus, this run is $\strategy_1$-compatible and therefore $0.5\in\Dpriv^{\strategy_1}(\TA)$.
More generally, the sets of private and public durations are respectively $\Dpriv^{\strategy_1}(\TA)~=~(0;1)$ and $\Dpub^{\strategy_1}(\TA)~=~[0;1]$ so $\TA$ is weakly opaque under $\strategy_1$, but not fully opaque. 
\end{example}

%
%

\section{Inter-reducibility of weak and full opacity in controllable TA}\label{sec:weak-full}

In~\cite{ADL24}, the weak ET-opacity and full ET-opacity problems were shown to be inter-reducible for TA without control. In this section, we establish the same result in the controllable setting through a completely new proof since the one from~\cite{ADL24} does not extends to the controllable setting.

\begin{restatable}{proposition}{interreduc}\label{prop:weak-full}
The weak ET-opacity and full ET-opacity problems for controlled TA are inter-reducible.
\end{restatable}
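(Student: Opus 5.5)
We prove the two directions separately. In each direction we build, from a TA $\TA$, a TA $\TA'$ on the same controllable actions $\ControlSet$. We then show that a sequential strategy $\strategy$ is opacity-ensuring for one problem on $\TA$ iff it (or a canonically transformed strategy) is opacity-ensuring for the other problem on $\TA'$. The key constraint throughout is that a single strategy must serve every branch of $\TA'$. Every branch therefore has to run its copy of $\TA$ \emph{synchronously in global time}, so that an implementation $\substrat$ enables the same actions for all of them.

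\textbf{Weak to full.} Add a fresh initial location $\loc'_0$ with invariant $\clock = 0$ for some clock $\clock$, which forces zero delay. From $\loc'_0$ we take two uncontrollable edges with guard $\top$ and no reset:
\begin{itemize}
\item one to $\locinit$ of the original copy of $\TA$, with private and public locations unchanged;
\item one to the initial location of a second copy of $\TA$ in which \emph{every} location is private.
\end{itemize}
The two steps are then as follows.
\begin{itemize}
\item Both copies start at time $0$ with clocks $\ClocksZero$ and use the same labels. So for any $\strategy$, the $\strategy$-compatible runs of the second copy are exactly those of $\TA$, up to the initial zero-delay step, with the same durations.
\item Hence $\Dpub^\strategy(\TA') = \Dpub^\strategy(\TA)$ and $\Dpriv^\strategy(\TA') = \Dpriv^\strategy(\TA)\cup\Dpub^\strategy(\TA)$. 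So $\Dpub^\strategy(\TA')\subseteq\Dpriv^\strategy(\TA')$ always holds, and full ET-opacity of $\TA'$ under $\strategy$ is equivalent to weak ET-opacity of $\TA$ under $\strategy$.
\end{itemize}
This gives the reduction with the identity map on strategies. The only routine check is that the zero-delay initial step is enabled by every implementation, which holds because its label is $\uncontrol$.

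\textbf{Full to weak.} Here we must encode both inclusions $\Dpriv\subseteq\Dpub$ and $\Dpub\subseteq\Dpriv$ into one inclusion. A naive union of $\TA$ with its "swapped" copy collapses both sides to $\Dpriv\cup\Dpub$, so the two copies must be distinguishable \emph{by duration} while still executing the controlled part at the same global times.

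My plan is to tag runs \emph{after} they finish. From each final location of copy $i\in\{1,2\}$, an uncontrollable gadget reaches a new final location after an additional delay pattern. Every controllable action happens before the tag, so the tag does not interact with the strategy. The aim is that each final duration of $\TA'$ decodes both $\duration(\run)$ and the copy index $i$, i.e.\ the tag maps the two copies to disjoint sets of durations. We would then set copy $1$ to be $\TA$ and copy $2$ to be $\TA$ with private and public locations swapped. Weak ET-opacity of $\TA'$ would then split into $\Dpriv^\strategy(\TA)\subseteq\Dpub^\strategy(\TA)$ on tag $1$ and $\Dpub^\strategy(\TA)\subseteq\Dpriv^\strategy(\TA)$ on tag $2$.

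To obtain such an injective tag, I would first try to exploit the time-interval structure:
\begin{itemize}
\item First, I would normalise $\TA$ so that every run ends at an integer time up to a controlled fractional offset, e.g.\ via a fresh clock recording the time since the last integer.
\item I would then append delays $0$ versus a fixed non-integer offset.
\item If that fails, I would instead multiply out the timeline by letting each unit interval of $\TA$ correspond to a block of intervals of $\TA'$, with some blocks reserved for tags, and translate strategies blockwise. The inter-reducibility then requires a bijective translation of sequential strategies between $\TA$ and $\TA'$, rather than the identity.
\end{itemize}

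\textbf{Main obstacle.} I expect the hardest step to be building this tagging or time-dilation gadget so that (i) tags are unambiguous in total duration, (ii) every strategy for $\TA'$ projects to a strategy for $\TA$ with the same compatible runs, and (iii) sequential strategies for $\TA'$ cannot use the extra time to discriminate between the two copies.
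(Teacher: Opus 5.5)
Your weak-to-full reduction is correct. It is essentially the adaptation of~\cite{ADL24} that the paper invokes: the all-private copy gives $\Dpriv^\strategy(\TA')=\Dpriv^\strategy(\TA)\cup\Dpub^\strategy(\TA)$ and $\Dpub^\strategy(\TA')=\Dpub^\strategy(\TA)$ for every $\strategy$. Your architecture for full-to-weak is also the paper's: a direct copy plus a role-swapped copy, separated by an uncontrollable post-termination duration tag, so that one strategy drives both. But that direction is not proved. The tag is the whole difficulty, and you leave it open.

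\textbf{The tag.} A fixed extra delay cannot work, because duration sets are in general unions of intervals. If copy~$2$ is shifted by $c>0$, then $\Dpub^\strategy(\TA)=[0,\infty)$ and $\Dpriv^\strategy(\TA)=\emptyset$ give $[c,\infty)\subseteq[0,\infty)$. So $\TA'$ is weakly opaque although $\TA$ is not fully opaque (and symmetrically if copy~$1$ is shifted). You must first make room in the timeline. That is your third option, and exactly what the paper does:
\begin{itemize}
\item It doubles every constant and adds a fresh clock, reset every two units, that confines all transitions of the copies to $[2n+1,2n+2)$.
\item A duration $n+t$ of $\TA$ (with $n\in\setN$, $t\in[0,1)$) then becomes $2n+1+t$.
\item The tagged copy waits exactly one more unit after termination, landing in $[2n+2,2n+3)$, which is disjoint from all untagged durations.
\end{itemize}
Two checks remain. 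First, the dilation preserves guards and invariants. At corresponding instants, a clock of value $k+\delta$ in $\TA$ (with $k\in\setN$, $\delta\in(-1,1)$) has value $2k+\delta$, and $k+\delta\compOp c$ iff $2k+\delta\compOp 2c$ for integer $c$. Second, strategies translate via $\{n\}\leftrightarrow\{2n+1\}$ and $(n;n+1)\leftrightarrow(2n+1;2n+2)$. No bijection is needed, because no controllable edge can fire in the remaining intervals. Since both copies are dilated identically and the tag edges are uncontrollable, a strategy cannot treat the copies differently; this settles your concern~(iii).

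\textbf{The swapped copy.} There is also a concrete error in copy~$2$. Marking $L\setminus\PrivSet$ as private does not exchange private and public \emph{runs}, because privacy means visiting a private location and every run visits $\locinit$. If $\locinit\notin\PrivSet$, every run of copy~$2$ becomes private and copy~$2$ has no public runs. The tag-$2$ inclusion then fails as soon as $\TA$ has a single compatible run, even when $\TA$ is fully opaque. The fix is as follows:
\begin{itemize}
\item Use $\APub$ (with $\PrivSet$ deleted) and $\APriv$ (a product with one bit recording a visit to $\PrivSet$, final only once the bit is set), both without private locations.
\item Make the $\APub$-runs of the swapped copy private by routing them through a fresh private location.
\end{itemize}
The paper's private/public labels and its routing clock $y$ implement exactly this.
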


\begin{proof}[Sketch of proof.]
The reduction from weak to full opacity is a direct adaptation of the
construction of~\cite{ADL24}: it does not use control in any essential way,
so we focus here on the harder direction, from full to weak opacity.

\paragraph{Why the uncontrolled argument fails.}
For a TA~$\TA$, write $\APriv$ and $\APub$ for the two TAs
isolating, respectively, the private and public runs of~$\TA$
(so $\runduration{\AllRuns{\APriv}{\strategy}} = \Dpriv^\strategy(\TA)$
and likewise for $\mathrm{pub}$, for any strategy~$\strategy$).
In the uncontrolled setting, full opacity of $\TA$ is checked by testing weak
opacity of $\APriv$ against $\APub$ \emph{and} of
$\APub$ against $\APriv$, using two separate automata.
Under control this breaks: each of the two automata might be made weakly
opaque by its own strategy, without any single strategy working for both.
We must therefore force \emph{one} strategy to answer both questions at once.

\paragraph{Idea: run both checks in the same automaton, on disjoint times.}
We build a single TA $\TA_{\mathrm{full}\to\mathrm{weak}}$ containing one copy
of $\APriv$ and one copy of $\APub$, and let every run
choose, nondeterministically and invisibly, which copy to enter first. Three
ingredients make this choice carry the information we need.

\begin{itemize}
  \item \textbf{Time-slicing.} All constants in $\APriv$ and
  $\APub$ are doubled, and a fresh clock~$x$, reset every two time
  units, confines all their transitions to intervals $[2n+1,2n+2)$. This
  leaves every interval $[2n,2n+1)$ free, to be used only for a final
  "duration shift" step.
  \item \textbf{A routing clock~$y$.} From the initial location, a run picks
  a branch (call it "private-labelled" or "public-labelled") and then a copy
  to enter ($\APriv$ or $\APub$). The clock~$y$ is reset
  exactly when the label and the copy \emph{agree} (private label entering
  the private copy, or public label entering the public copy), and left
  un-reset when they \emph{disagree}. So $y$ silently records a single bit:
  "did the label match the copy?".
  \item \textbf{Reading off that bit as a delay.} At the end of the run, an
  extra time unit is forced to elapse if and only if $y$ was reset on entry.
  Because of the time-slicing, this shift can only happen in the free
  intervals $[2n,2n+1)$, so it never interferes with the doubled durations
  produced inside the copies.
\end{itemize}

This produces exactly four kinds of runs, summarised below ($f(D) =
\{2n+1+t \mid n+t \in D\}$ encodes the doubling, and $f(D)+1$ the doubling
together with the extra shift):

\begin{center}
\begin{tabular}{lccc}
\toprule
label & copy entered & $y$ reset? & duration set contributed \\
\midrule
private & $\APriv$ & yes & $f(\Dpriv^\strategy(\TA)) + 1$ \\
private & $\APub$  & no  & $f(\Dpub^\strategy(\TA))$ \\
public  & $\APriv$ & no  & $f(\Dpriv^\strategy(\TA))$ \\
public  & $\APub$  & yes & $f(\Dpub^\strategy(\TA)) + 1$ \\
\bottomrule
\end{tabular}
\end{center}

Hence, under (the appropriately rescaled version of) any strategy~$\strategy$,
\[
\Dpriv^\strategy(\TA_{\mathrm{full}\to\mathrm{weak}})
  = \bigl(f(\Dpriv^\strategy(\TA))+1\bigr) \cup f(\Dpub^\strategy(\TA))\]
  and \[
\Dpub^\strategy(\TA_{\mathrm{full}\to\mathrm{weak}})
  = f(\Dpriv^\strategy(\TA)) \cup \bigl(f(\Dpub^\strategy(\TA))+1\bigr).
\]

\paragraph{Why this gives the reduction.}
The sets $f(D)$ and $f(D)+1$ are always disjoint (they live in disjoint time
intervals), so the inclusion
$\Dpriv^\strategy(\TA_{\mathrm{full}\to\mathrm{weak}}) \subseteq
\Dpub^\strategy(\TA_{\mathrm{full}\to\mathrm{weak}})$
splits, term by term, into the two inclusions
$f(\Dpub^\strategy(\TA)) \subseteq f(\Dpriv^\strategy(\TA))$ and
$f(\Dpriv^\strategy(\TA))+1 \subseteq f(\Dpub^\strategy(\TA))+1$.
Since $f$ is injective, this is equivalent to
$\Dpriv^\strategy(\TA) = \Dpub^\strategy(\TA)$, \ie{} to full
ET-opacity of $\TA$ under~$\strategy$. So $\strategy$ (rescaled) makes
$\TA_{\mathrm{full}\to\mathrm{weak}}$ weakly ET-opaque if and only if it makes
$\TA$ fully ET-opaque, which is exactly the reduction we need.
\end{proof}

Note that this result implies the inter-reducibility result from \cite{ADL24} since the uncontrollable setting corresponds exactly to the controllable setting where all transitions are supposed uncontrollable. 
Hence, in the following proofs, we focus on weak ET-opacity and the hardness is carried onto full ET-opacity by the inter-reduction.

\section{Lower bounds for execution-time opacity problems}

ET-opacity for timed automata was introduced in~\cite{ALMS22} where it was shown to be decidable in \fiveEXPTIME (this result is more detailed in~\cite{ALLMS23}).
This algorithm relied on an expensive transcription of the problem into a decidable
logic.
It was later shown that most of the exponential blow ups could be avoided by translating the problem into the inclusion of languages between two (exponential) unary automata, 
producing a \coNEXPTIME algorithm but no lower bound to match~\cite{ALM23}.

Later, the ET-opacity framework was extended to controllable TA in~\cite{ADLL26} 
where the authors established that the existence of a sequential strategy ensuring full ET-opacity in a timed automaton was decidable. More precisely, the authors turn the 
TA into a doubly exponential one player game, built around a determinisation of the 
region automaton of the given TA. Solving the game could be done in \NLOGSPACE, hence producing an \EXPSPACE{} algorithm. Once again, no lower bound was established for this problem.

We refine here these two results by showing that their complexity are tight.

\begin{restatable}{theorem}{hardopac}\label{ET-opac-unc}
The weak (resp.full) ET-opacity of a TA problem is \coNEXPTIME-complete
\end{restatable}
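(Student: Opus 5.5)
The upper bound is taken from~\cite{ALM23}, with the negation corrected. Non-opacity is witnessed by a duration $d$ that is in $\Dpriv(\TA)$ but not in $\Dpub(\TA)$. Membership of $d$ in each set reduces to a question about unary automata of exponential size, obtained from the region abstraction that tracks integer and fractional parts of the total elapsed time. A witness $d$ can be guessed with exponentially many bits, and each membership can be checked in exponential time. So the complement of weak opacity is in \NEXPTIME, and weak opacity is in \coNEXPTIME. Full opacity is the conjunction of two such inclusions, so it stays in \coNEXPTIME. Alternatively, full opacity inherits the bound from weak opacity through \cref{prop:weak-full}, since the uncontrolled setting is the special case with $\ControlSet=\emptyset$.

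For hardness, we reduce to non-opacity and show that deciding whether $\Dpriv(\TA)\not\subseteq\Dpub(\TA)$ is \NEXPTIME-hard; by \cref{prop:weak-full} this suffices for both variants. The source problem is the \NEXPTIME-hard fragment of parametric TCTL model checking from~\cite{QSW17}. Its essential content is the following question: given a TA, or a network of cycles/counters, does there exist a single integer parameter value $p$ (of exponential magnitude) such that a synchronisation property holds for all runs at time $p$? Typical forms of the property are \enquote{every branch reaches a goal exactly at time $p$} or \enquote{no branch is in a bad location at time $p$}. The existential choice of $p$ becomes the duration $d$ chosen by the attacker's witness. The universal part becomes the public side: public runs of duration $d$ exist exactly when some branch \emph{violates} the property at time $p$.

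The construction has two branches from a fresh initial location. The first branch is private: it goes to a private location and then lets any integer duration elapse, using a clock reset every unit and a transition to the final location only when that clock equals $0$. Hence $\Dpriv=\setN$, or a suitably encoded set. The second branch is public: it simulates the system of~\cite{QSW17} and may reach the final location at time $t$ exactly when some run witnesses the negation of the property at time $t$. Then a duration $p\in\Dpriv\setminus\Dpub$ exists iff some $p$ makes the property hold universally. In other words, $\TA$ is not weakly ET-opaque iff the \NEXPTIME-hard instance is positive, which gives \coNEXPTIME-hardness.

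The main obstacle is matching the exact fragment from~\cite{QSW17}. Their hardness uses, for example, several cycles of binary-encoded lengths, for which \enquote{the parameter time is a common hit} yields a succinct CRT-style hard problem. We must check that the negation of the synchronisation condition is existentially expressible by a single run ending at exactly that time. Violations at time $t$ must be exhibited by letting the run finish precisely at $t$; this is done with an edge to the final location guarded by a clock that measures the phase in the chosen component. We must also ensure that reaching the final location early or late introduces no spurious public durations. Non-integer durations must not break the equivalence, which we handle by restricting public endings to integer times as well. I would first try taking the simplest hard instance directly, namely succinct cycle-synchronisation, and encode the cycles as one-clock loops of binary constants. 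That keeps the TA polynomial.
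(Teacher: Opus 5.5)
Your overall architecture matches the paper's. The upper bound is the corrected argument of \cite{ALM23}; your guess-and-check over exponentially many bits just unpacks the \coNP{} bound for unary automata, applied to automata of exponential size. Hardness comes from reducing the synchronisation problem of \cite{QSW17} to the failure of weak opacity, and full opacity follows by \cref{prop:weak-full}. However, the step you leave open is not really an obstacle, and two of your choices are problematic. The paper uses the source problem in its exact form: given a TA $\TA$ and a set $P$ of locations, is there a time $\theta\geq 0$ at which every run of $\TA$ is in $P$? It keeps $\TA$ as a black box and adds two kinds of exits:
\begin{itemize}
\item an instantaneous public exit $(\loc,\top,\uncontrol,\emptyset,\locfinal)$ from each $\loc\notin P$;
\item from each $\loc\in P$, a private exit through $\locpriv$, which resets a fresh clock $z$ and then leaves under the guard $z=0$.
\end{itemize}
Then $\Dpriv$ is the set of times at which some run is in $P$, and $\Dpub$ is the set of times at which some run is outside $P$. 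So $\Dpriv\setminus\Dpub$ is exactly the set of witnesses. No fragment matching is needed. The only adaptation is replacing the infinitely-many-transitions assumption of \cite{QSW17} by an invariant $x\leq 2$.

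First problem: your private branch is independent of the system and gives $\Dpriv=\setN$, which is sound only if the quantified time is integer-valued. As the problem is stated, $\theta$ is an arbitrary time, so a positive instance may have only non-integer witnesses. For example, runs could sit in $P$ during each $(n;n+1)$ but pass instantaneously through a location outside $P$ at every integer $n$. Your TA would then be weakly opaque, and the reduction fails. There are two fixes.
\begin{itemize}
\item Let the private branch end at any time, so $\Dpriv=\setRgeqzero$. You must then \emph{not} restrict public exits to integer times, since otherwise every non-integer duration is trivially non-opaque.
\item Use the paper's detour, which ties the private durations to the system itself.
\end{itemize}

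Second problem: your fallback of encoding polynomially many cycles as one-clock loops with binary constants cannot carry \NEXPTIME-hardness. Deciding whether such an explicit family has a common hit time is in \NP{}: a witness, if one exists, can be taken below the product of the periods, guessed in binary, and checked loop by loop. The paper notes that in the hard instances the synchronisation time can be doubly exponential, i.e.\ it needs exponentially many bits. A polynomial family of explicit loops cannot force that. So you must reduce from the instance of \cite{QSW17} itself rather than re-encode it; your estimate that the parameter has ``exponential magnitude'' understates this.
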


\begin{proof}
We focus on weak ET-opacity and rely on~\cref{prop:weak-full} to extend the hardness to full ET-opacity.

Let us discuss the \coNEXPTIME algorithm first. 
In~\cite{ALM23}, the authors reduce ET-opacity to the language equality of two unary automata. The automata they build are of exponential size, and they claim the equality of langage can be tested (by~\cite{SM73}) in \NP, hence producing a \NEXPTIME algorithm. 
The claim is slightly wrong however:~\cite{SM73} considered non-equality, not equality, of langages.
Hence, by~\cite{SM73}, deciding the equality of the languages of two unary automata is in fact in \coNP. Therefore the algorithm of~\cite{ALM23} is in \coNEXPTIME.

We establish \coNEXPTIME-hardness now.
The reduction relies on a result of~\cite{QSW17}, who showed that model-checking a specific reachability formula of parametric TCTL is \NEXPTIME-hard\footnote{Remark: in~\cite{QSW17}, the authors do not rely on invariants, instead assuming that every infinite run has infinitely many transitions. This means that they dismiss every run reaching a configuration from which no transition will ever become available. 
In their hardness proof, their condition is used to force the system to take a transition
which reset a clock $x$ every 2 time units. 
Using an invariant which requires $x\leq 2$ hence has the same effect.
Thus while the frameworks are slightly different, this has no impact here.
}. 
The formula asks:
``given a TA $\TA$ and a set of target locations $P$, does there exist a time 
$\theta\geq 0$ such that every run of $\TA$ is in a location of $P$ at time exactly $\theta$?''.
Intuitively, this problem is hard because it requires witnessing a moment of global synchronisation across all runs: not just that some run reaches $P$ at time $\theta$, but that every run currently lie among those locations. 
The exponential blow-up stems from the fact that the synchronisation time can be doubly exponential in the size of the automaton.

We show how to encode this synchronisation property as the negation\footnote{Establishing that the negation of weak ET-opacity is \NEXPTIME-hard proving by definition that weak ET-opacity is \coNEXPTIME-hard.} of a weak ET-opacity question.
The key observation in this reduction is the following: if, from every location in $P$, a run can instantly terminate by first visiting a private location, and from every location outside $P$, a run can instantly terminate without visiting any private location, then the private durations are exactly the times at which a run is in $P$, and the public durations are exactly the times at which some run is outside $P$. 
A duration $\theta$ that is private but not public, a violation of weak ET-opacity, is therefore exactly a time at which every run is in $P$, \emph{i.e.}, a witness for the parametric TCTL formula. Conversely, if such a $\theta$ exists, it produces a private duration with no matching public run, violating weak ET-opacity.
The construction (illustrated in~\cref{fig:nexptime-reduction}) implements this idea. To ensure that the detour through the private location adds no time to the run's duration, the transition from $\loc_{\mathit{priv}}$ to $\loc_f$
is guarded by $z = 0$, where $z$ is a fresh clock reset upon entering $\loc_{\mathit{priv}}$. This forces the transition to be taken instantly, so the total duration of the private run is exactly $\theta$.

%
%
%

Formally, let $\TA=\TAcontrolextend$ be a TA and $P$ be a set of locations of $\TA$\sd{Typo corrigée}.
Note that as~\cite{QSW17} does not rely on control, private sets or final sets, we can assume that $\ControlSet = \PrivSet = \FinalSet=\emptyset$.
We build the TA $\TA'= \{\emptyset, \LocSet',\locinit,\{\locpriv\},\{\locfinal\},\ClockSet\cup\{z\},\invariant',\EdgeSet'\}$ (illustrated in \cref{fig:nexptime-reduction}) such that
\begin{itemize}
\item $\LocSet'=\LocSet\cup\{\locpriv,\locfinal\}$;
\item for all $\loc\in \LocSet, \invariant'(\loc)=\invariant(\loc)$ and 
$\invariant'(\locpriv)=\invariant'(\locfinal)=\top$;
\item 
$\EdgeSet'=\EdgeSet\cup\{(\loc,\top,\uncontrol, \emptyset,\locfinal)\mid \loc \in \LocSet\setminus P\}
 \cup \{(\loc,\top,\uncontrol, \{z\},\locpriv)\mid \loc \in P\} \cup \{(\locpriv,(z=0), \uncontrol, \emptyset, \locfinal)\}.$
\end{itemize}

\begin{figure}[bth]
\centering
\begin{tikzpicture}[pta, node distance=2cm, thin]

  \node[location, initial]  (l0)                       {$\locinit$};
  \node[location]           (lA) [right=1.7cm of l0]   {$\ell \in L \setminus P$};
  \node[location]           (lP) [above right=1.5cm and 2.2cm of l0] {$\ell \in P$};
  \node[location, private]  (lpriv) [right=5cm of lP]  {$\loc_{\mathit{priv}}$};
  \node[location, final]    (lf)  [below=1.5cm of lpriv] {$\loc_f$};

	\node[rectangle, minimum width=5cm, minimum height=2.9cm, draw] (rectangle) at (1.3cm, 0.9cm) {};
	\node[anchor=north west,inner sep=3pt] at (rectangle.north west) 
    {$\TA$};

  \path[->] (lA) edge
    node [below] {$\uncontrol$} (lf);

  \path[->] (lP) edge
    node [above] {$\uncontrol, \styleclock{z}\assign 0$} (lpriv);

  \path[->] (lpriv) edge
    node [right] {$\uncontrol, \styleclock{z} = 0$} (lf);

\end{tikzpicture}
\caption{TA $A'$ used in the \coNEXPTIME-hardness reduction. 
Locations in $P$ route through $\loc_{\mathit{priv}}$ (resetting
$z$), while other locations reach $\loc_f$ instantly.}	
\label{fig:nexptime-reduction}
\end{figure}

We now argue that $\TA'$ is not weakly ET-opaque if and only if there exists a
duration $\theta$ such that all runs of $\TA$ are in a location of $P$ at time
exactly $\theta$.

\smallskip
\noindent($\Rightarrow$)
Suppose $\TA'$ is not weakly ET-opaque, i.e., there exists a duration $d \in
\Dpriv(\TA') \setminus \Dpub(\TA')$.
Since $d \in \Dpriv(\TA')$, there is a private run of $\TA'$ of duration $d$.
A run is private iff it visits $\loc_{\mathit{priv}}$, which happens iff it exits
some location $\loc \in P$ via the transition resetting $z$.
The run then immediately (as the transition is guarded by a ``$z=0$'' guard) 
reaches $\loc_f$.
Hence $d = \theta$ for some $\theta$ at which the run was in a location of $P$.

Now since $d \notin \Dpub(\TA')$, no public run has duration $d$.
A public run of duration $d$ would exit some location $\loc \in \LocSet \setminus P$
at time $d$ via the instant transition to $\loc_f$, meaning some run of $\TA$ is in
$\LocSet \setminus P$ at time $d$.
But since no such public run exists, \emph{every} run of $\TA$ must be in a location
of $P$ at time $d$.

Hence all runs of $\TA$ are in a location of $P$ at time exactly $d$
(\emph{i.e.} $d$ is a witness for the formula of~\cite{QSW17}).

\smallskip
\noindent($\Leftarrow$)
Conversely, suppose there exists $\theta \geq 0$ such that every run of $\TA$ is
in a location of $P$ at time exactly $\theta$.
Then every run of $\TA'$ that reaches $\loc_f$ at time $\theta$ must have
passed through $\loc_{\mathit{priv}}$ (since the only way to reach $\loc_f$ at
time $\theta$ from a location of $P$ is to have taken the transition to $\locpriv$, resetting $z$, and immediately left for $\locfinal$).
All those runs are private, hence $\theta\in \Dpriv(\TA')$ and $\theta\not \in \Dpub(\TA')$, \emph{i.e.} $\TA'$ is not weakly ET-opaque.
\end{proof}

\begin{theorem}\label{th:ET-opac-comp}
The weak (resp.\ full) ET-opacity of a controlled TA problem is \EXPSPACE-complete.
\end{theorem}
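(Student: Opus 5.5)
The upper bound is already known: \cite{ADLL26} decides the existence of a sequential strategy ensuring full ET-opacity in \EXPSPACE. By \cref{prop:weak-full}, weak ET-opacity inherits this bound. It therefore suffices to prove \EXPSPACE-hardness of the weak problem and transfer it to the full problem through \cref{prop:weak-full}.

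For hardness, I will reduce from the inequivalence problem for regular expressions over $\op$ (union, concatenation, squaring, star), which is \EXPSPACE-complete by \cite{SM73}. It is convenient to reduce from the complementary universality-style question: given $\regexp$ over an alphabet $\Gamma$, is $L(\regexp)\neq\Gamma^*$? Hardness for this variant is standard from \cite{SM73}. The plan is to let the sequential strategy \emph{write a word}: in successive unit intervals $(n;n+1)$, the strategy enables a single controllable action $\caction_a$ for a letter $a\in\Gamma$ (or a special end marker $\extraAction$). This encodes a finite word $w\in\Gamma^*$, and the strategy is otherwise constrained so that deviating from this format is useless.

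The TA is built from three components.
\begin{enumerate}
\item \textbf{Format checker.} A public/private gadget punishes malformed strategies. If some interval enables two letters, or no letter, before the end marker, a private run exists whose duration has no public counterpart.
\item \textbf{Public component.} Its runs read the strategy's word through the controllable letter actions, one letter per unit interval, and realise every duration up to the end marker. This ensures that, on a well-formed word, opacity can only fail where the private component says so.
\item \textbf{Private component.} It nondeterministically guesses a derivation of a factor of $w$ in the syntax tree of $\regexp$, in the style of the classical NFA-with-counters encoding. It produces the specific duration $T$ (the end-marker time) only if $w\in L(\regexp)$, or alternatively only if $w\notin L(\regexp)$. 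The appropriate polarity is chosen so that a strategy is opacity-ensuring iff it writes a word outside $L(\regexp)$.
\end{enumerate}
Union and concatenation are handled by nondeterministic branching and by sequencing at integer times. Star is a loop. Squaring is the delicate operator: $e^2$ must check the \emph{same} subexpression twice on consecutive factors without duplicating the automaton, since duplication would give exponential size.

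The main obstacle is therefore twofold. First, the squaring operator must be simulated with polynomially many locations and clocks. Second, the witness word may have doubly exponential length in the worst case, since strategies can be that long and the known lower bound for $\op$ expressions relies on it. I plan to use a \emph{doubly exponential countdown gadget}: with $k$ clocks and comparisons between clocks and integer constants, a TA can measure durations of $2^{2^{k}}$ units only if it can count. I would try to implement a binary counter of exponential width, stored not in clocks but in the strategy's word itself, with adjacent counter values checked by the private component. An increment error is detected by guessing a bit position $i$ (encoded by a polynomial-size clock offset) and comparing bit $i$ of two consecutive blocks via a clock reset at one block and tested against the block length. Any error then yields a private duration that is not public.

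Squaring is then handled by keeping a stack of ``return points'' implicit. For each subexpression $e^2$, the private run guesses a midpoint and checks each half separately, choosing nondeterministically which half to descend into. Because only \emph{one} private run needs to exhibit the failure, this existential descent suffices for the negated property and keeps the automaton polynomial. Finally, I must verify that sequential strategies cannot cheat by exploiting the freedom of implementations within an interval. Restricting each letter action to be taken at integer-guarded times $x=1$, with $x$ reset each unit, makes the implementation's switching times irrelevant.
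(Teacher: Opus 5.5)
Your upper-bound argument is the paper's, but the hardness construction has two genuine gaps.

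\textbf{First, the squaring step is unsound.} With your stated polarity (a well-formed strategy writing $w$ is opacity-ensuring iff $w\notin L(\regexp)$), the private component must certify $w\in L(\regexp)$. Membership in $L(e^2)$ is a conjunction: for \emph{some} midpoint, \emph{both} halves lie in $L(e)$. A run that guesses a midpoint and descends into only one half certifies something strictly weaker. For $\regexp=a^2$ and $w=ab$, the run choosing midpoint $1$ and the first half succeeds, so the strategy writing $ab\notin L(\regexp)$ is wrongly punished. Reading the check the other way does not help. The property $w\notin L(e^2)$ quantifies universally over midpoints, and nested squares would need one quantifier alternation per nesting level, which one existential private run cannot provide. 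What is missing is a polynomial-size discrete-time TA accepting \emph{exactly} $L(\regexp)$, one letter every two time units, which the paper imports from \cite{ADL24}. The succinctness for squaring comes from the clocks, not from splitting the check across runs. Membership is then decided along a single run. When $\sharp$ is enabled, final and non-final locations of this TA exit through private locations with different delays ($x\geq 1$ versus $x\geq 2$). This leaves the private window $[d+1,d+2)$ without public counterpart exactly when the word is accepted.

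\textbf{Second, nothing forces the strategy ever to enable the end marker.} Take a strategy that enables exactly one letter per step forever. It triggers neither your format checker nor your private component (which only acts at the end-marker time). Meanwhile your public component realises every duration ``up to the end marker'', that is, all of them. Such a strategy is therefore opacity-ensuring for every $\regexp$, so your reduction maps every instance to a yes-instance. The same public component would also cover the private durations your format checker is supposed to leave uncovered. Hence no error detected before the end marker, including by your counter, can ever be penalised.

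The real obstacle is not that witnesses can be long, since a sequential strategy can write arbitrarily long words. It is that non-terminating strategies must lose, and this requires a source of non-opacity the strategy cannot influence. The paper uses a fully uncontrollable countdown block whose first private duration without public counterpart is $N\cdot\operatorname{lcm}(1,\dots,N-1)$, doubly exponential for $N$ written in binary. Only a public run of the language-reading block can cover it, i.e.\ only if the strategy has emitted some accepted $w\sharp$ in time. That block's public durations start only at $d+2$, after $\sharp$, so earlier format violations remain visible. The argument is completed by the bound that a shortest witness has at most doubly exponential length, so the deadline discards no witness.
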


\begin{proof}
We focus on weak ET-opacity, by \cref{prop:weak-full} the result then also applies to
full ET-opacity.
The \EXPSPACE{} upper bound was established in~\cite{ADLL26}. We prove
\EXPSPACE{}-hardness by reduction from the inequivalence problem for regular
expressions written with \op{} (union, concatenation, squaring, Kleene star), which
is \EXPSPACE{}-complete~\cite{SM73}.

\smallskip\noindent\textbf{Overview of the reduction.}

Given two regular expressions $\regexp_1$ and $\regexp_2$ over an alphabet $\ActionSet$,
we construct a TA~$\TA$ (represented in \cref{fig:difference-TA}) such that a
sequential strategy making $\TA$ weakly ET-opaque exists if and only if
$\regexp_1 \setminus \regexp_2 \neq \emptyset$. Since
$\regexp_1 \neq \regexp_2$ iff $\regexp_1 \setminus \regexp_2 \neq \emptyset$ or
$\regexp_2 \setminus \regexp_1 \neq \emptyset$, applying the construction twice
(with swapped indices) yields the full reduction.
The letters of $\ActionSet$, together with a new end symbol $\sharp$, will constitute the control alphabet $\ControlSet$ of the whole controllable TA.

By~\cite{ADL24}, every regular expression written with \op{} can be converted into a
discrete-time TA of polynomial size that produces words, one letter every two time units. 
We write $\TA_i= (\ControlSet, \LocSet_i,\locinit^i,\emptyset,\FinalSet^i,\ClockSet_i,\invariant_i,\EdgeSet_i)$ for the
TA obtained from $\regexp_i$, of polynomial size. In this construction, there is a clock, $x$, that is reset every two time units exactly when a transition labelled by an action in $\ActionSet$ is taken. We use this same clock in $\TA_1$, $\TA_2$ and refer to it as $x$ in the whole TA of the reduction.

For a word $w = w_0 w_1 \dots w_{|w|-1}$ in $\ActionSet^*$, we denote by $S_{w\sharp}$ the set of strategies enabling for $i$ in $\{0, \dots, |w|-1\}$ the single action $w_i$ at the time $2i$, the single action $\sharp$ at the time $2|w|$ and nothing during the rest of the time until the time $2|w|$.

The remainder of the construction will serve to constrain the form of the strategies ensuring weak ET-opacity on this TA, requiring that each opacity-ensuring strategy is in some $S_{w\sharp}$ with $w \in \regexp_1 \setminus \regexp_2$. In particular, several of the following blocks are made to prevent opacity-ensuring strategies from being of the following types:
\begin{itemize}
\item[] (type 1): the strategy enables a controllable action at a non even-integer time;
\item[] (type 2): the strategy enables two actions simultaneously;
\item[] (type 3): the strategy never enables $\sharp$.
\end{itemize}

The TA $\TA$ is obtained by building
a fresh initial location $\locinit$ from which four uncontrollable $\clock=0$
transitions lead to the initial locations of three blocks that run in parallel. Their roles are the following.
\begin{itemize}
  \item \emph{First block} $\TB_0$: a \emph{normalisation gadget} that penalises
    any strategy of (type 1) or (type 2). 
  \item \emph{Second block} ($\TB_1$ and $\TB_2$): two \emph{language-reading gadgets} built from $\TA_1$ and $\TA_2$, respectively. They read the word produced by the strategy and generate duration sets whose pattern reveals whether the word belongs to $\regexp_1 \setminus \regexp_2$.
  \item \emph{Third block} $\TB_{\extraAction}$: a \emph{countdown gadget} whose
    private-duration set is non-opaque after a doubly exponential time. This addresses the case of (type 3)-strategies by forcing any opacity-ensuring strategy to commit to a word $w\sharp$ before this deadline, making the reduction complete.
\end{itemize}
The complete TA $\TA$ is obtained by combining these four components as shown in
\cref{fig:difference-TA}. We now detail each block.

\begin{figure}[bt]
\begin{center}
\begin{tikzpicture}[pta, yscale=1.2, thin]

	\node[location, private] (lt1) at (-1, 4) {$\loc^{\textit{test}}$};
	\node[location] (lt2) at (+1, 4) {$\loc^{a}$};
	\node[location] (lt4) at (+5, 4) {$\loc^{\textit{end}}$};
	\node[location, final] (lt3) at (+8, 4) {};

	\node (l01) at (-1, 1.5) {};
	\node[location] (lf1) at (+1, 2.42) {$\locfinal^1$};
	\node[location] (lnf1) at (+1, 0.55) {};
	\node[rectangle, minimum width=3cm, minimum height=3cm, align=center, draw] (rectangle) at (0cm, 1.5cm) {$\TA_1$};
	
	\node[location] (l11) at (+5, 2.42) {$\loc^e$};
	\node[location, private] (l21) at (+5, 0.55) {$\locpriv^1$};
	\node[location, final] (l31) at (+8, 1.5) {};

	\node (l02) at (-1,-1.5) {};
	\node[location] (lf2) at (1, -1) {$\locfinal^2$};
	\node[location] (lnf2) at (+1, -2) {};
	\node[rectangle, minimum width=3cm, minimum height=2cm, align=center, draw] (rectangle) at (0cm, -1.5cm) {$\TA_2$};
	
	\node[location, private] (l12) at (+5, -1) {$\locpriv^2$};
	\node[location, private] (l22) at (+5, -2) {$\locpriv^3$};
	\node[location, final] (l32) at (+8, -1.5) {};
	
	\node (l03) at (-0.75,-3.5) {};
	\node[rectangle, minimum width=2.5cm, minimum height=1.5cm, align=center, draw] (rectangle) at (0cm, -3.5cm) {$\TA_{\extraAction}$};
	
	\node[location, initial] (l0) at (-4, 0) {$\locinit$};

	\path
	(l0) edge node[align=center, above left, pos=0.6]{$\actionu$\\$\styleclock{\clock} = 0$} (l01)
	(l0) edge[bend left] node[align=center, above left, pos=0.6]{$\actionu$\\$\styleclock{\clock} = 0$} (lt1)
	(l0) edge node[align=center, below=+3pt, pos=0.5]{$\actionu$\\$\styleclock{\clock} = 0$}  (l02)
	(l0) edge[bend right] node[below left, align=center, pos=0.6]{$\actionu$\\$\styleclock{\clock} = 0$} (l03)
    (lt1) edge[loop above] node[align=center]{$\actionu$\\$\styleclock{\clock} = 2$\\$\styleclock{\clock} \assign 0$} (lt1)
    (lt1) edge[bend left] node[align=center]{$\styleact{a}$\\$0<\styleclock{\clock} < 2$} (lt4)
    (lt1) edge node[align=center, below]{$\styleact{a}\in \ControlSet$\\$\styleclock{\clock} \assign 0$} (lt2)
    (lt2) edge node[align=center]{$\styleact{b}\neq a$\\$\styleclock{\clock} = 0$} (lt4)
    (lt4) edge node[align=center]{$\actionu$} (lt3)
    	(lf1) edge node[align=center]{$\styleact{\extraAction}$\\$\styleclock{\clock} = 2$\\$\styleclock{\clock} \assign 0$} (l11)
	(lf1) edge node[align=center, pos=0.7]{$\styleact{\extraAction}$\\$\styleclock{\clock} = 2$\\$\styleclock{\clock} \assign 0$} (l21)
	(lnf1) edge node[align=center, pos=0.3]{$\styleact{\extraAction}$\\$\styleclock{\clock} = 2$\\$\styleclock{\clock} \assign 0$} (l21)
	(l11) edge node[align=center]{$\actionu$\\$\styleclock{\clock} \geq 2$} (l31)
	(l21) edge node[align=center]{$\actionu$\\$\styleclock{\clock} \geq 2$} (l31)
	
	(lf2) edge node[align=center, above]{$\styleact{\extraAction}$\\$\styleclock{\clock} = 2$\\$\styleclock{\clock} \assign 0$} (l12)
	(lnf2) edge node[align=center, below]{$\styleact{\extraAction}$\\$\styleclock{\clock} = 2$\\$\styleclock{\clock} \assign 0$} (l22)
	(l12) edge node[align=center, above right]{$\actionu$\\$\styleclock{\clock} \geq 1$} (l32)
	(l22) edge node[align=center, below]{$\actionu$\\$\styleclock{\clock} \geq 2$} (l32)
	
	;
\end{tikzpicture}
\end{center}
  \caption{Representation of the reduction to the ET-opacity of a controlled TA}
  \label{fig:difference-TA}
\end{figure}

\paragraph*{First block.}

The first block mainly consists of a private location 
$\loc^{\textit{test}}$ from which one can reach another location $\loc^{\textit{end}}$ in two ways: if 
the strategy allows two actions simultaneously (type 2) 
or it enables some action at a non-even-integer time (type 1). In both cases, this will then produce private runs.
Formally, it is defined as the TA 
$\TB_0= (\ControlSet, \LocSet^0,\loc^{\textit{test}},\{\loc^{\textit{test}}\},\{\locfinal^0\},\ClockSet^0,\invariant^0,\EdgeSet^0)$ where
\begin{itemize}
\item $\LocSet^0 = \{\loc^{\textit{test}}, \loc^{\textit{end}},\locfinal^0 \} \cup \{\loc^a\mid a\in \ControlSet\}$,
\item $\ClockSet^0=\{x\}$,
\item $\invariant^0$ assigns $\top$ to every location in $\LocSet^0$,
\item $\EdgeSet^0 = \{(\loc^{\textit{test}}, x=2,\uncontrol, \{x\},\loc^{\textit{test}}), 
(\loc^{\textit{end}}, \top,\uncontrol, \emptyset,\locfinal^0)\}$\\
\makebox[15cm]{$\cup~\{(\loc^{\textit{test}}, \top, a, \{x\},\loc^{a}),(\loc^{\textit{test}}, 0<x<2,a, \emptyset,\loc^{\textit{end}}), (\loc^{a}, x=0, b, \emptyset,\loc^{\textit{end}})\mid a,b\in \ControlSet, b\neq a\}.$}
\end{itemize}

The private location $\loc^{\textit{test}}$ resets $x$ every 2 time units via a self-loop.
Whenever the strategy allows an action $a \in \ControlSet$ at a non-even-integer time $t$, the run can take the $0<x<2$ transition to $\loc^{\textit{end}}$ and subsequently reach $\locfinal^0$ at any duration, producing private runs at all times from $t$ onwards. 
Similarly, if two actions $a \neq b$ are simultaneously allowed, a run can go $\loc^{\textit{test}} \to \loc^a \to \loc^{\textit{end}}$ in zero time (as $x=0$ after the reset on the first transition), again reaching $\loc^{\textit{end}}$ 
and producing a private duration.

Conversely, if neither condition occurs (i.e., every controllable action is enabled
at most one at a time, and only at even integers) then no run ever reaches $\loc^{\textit{end}}$,
and $\TB_0$ produces no run reaching a final location (and thus produces no duration).

\smallskip\noindent
\textbf{Consequence.} 
Let $\strategy$ be a control strategy. If it allows either an action at a non-even-integer time $t$ (type 1) or two actions at the same time $t$ (type 2), then the set of private durations $\Dpriv(\AllRuns{\TB_1}{\strategy})$ of $\TB_0$ under $\strategy$ is $[t; +\infty)$. If no such time $t$ exists for $\strategy$, then $\Dpriv(\AllRuns{\TB_0}{\strategy}) = \emptyset$.

In the description of the second block, we show that those private durations violate the ET-opacity of the full TA $\TA$ if the time $t$ occurs before the action $\sharp$ is allowed.

\paragraph*{Second block.}

We now define two extended TAs that attach private/public routing to the output of
$\TA_1$ and $\TA_2$. Let
\begin{align*}
\TB^1 &= (\ControlSet,\LocSet^1,\locinit^1,\{\locpriv^1\},\{\locfinal^{\TB_1}\},
           \ClockSet_1,\invariant^1,\EdgeSet^1),\\
\TB^2 &= (\ControlSet,\LocSet^2,\locinit^2,\{\locpriv^2,\locpriv^3\},\{\locfinal^{\TB_2}\},
           \ClockSet_2,\invariant^2,\EdgeSet^2),
\end{align*}
where
\begin{itemize}
\item $\LocSet^1 = \LocSet_1 \cup \{\loc^e, \locpriv^1, \locfinal^{\TB_1}\}$ and
  $\LocSet^2 = \LocSet_2 \cup \{\locpriv^2, \locpriv^3, \locfinal^{\TB_2}\}$,
\item $\invariant^1$ and $\invariant^2$ assign $\top$ to every location,
\item $\EdgeSet^1 = \EdgeSet_1
		\cup \{(\loc, x=2, \sharp, \{x\}, \loc^e),\,
            (\loc, x=2, \sharp, \{x\}, \locpriv^1)
           \mid \loc \in \FinalSet^1\}$\\
  \makebox[8.1cm]{$\cup \{(\loc, x{=}2, \sharp, \{x\}, \locpriv^1)
           \mid \loc \notin \FinalSet^1\}$}\\
  \makebox[10.3cm]{$\cup \{(\locpriv^1, x\geq 2, \uncontrol, \emptyset, \locfinal^{\TB_1}),\,
               (\loc^e, x \geq 2, \uncontrol, \emptyset, \locfinal^{\TB_1})\}$,}\\
  and\\
  $\EdgeSet^2 = \EdgeSet_2
    \cup \{(\loc, x=2, \sharp, \{x\}, \locpriv^2)
           \mid \loc \in \FinalSet^2\}$\\
  \makebox[8.3cm]{$\cup \{(\loc, x=2, \sharp, \{x\}, \locpriv^3)
           \mid \loc \notin \FinalSet^2\}$}\\
  \makebox[10.7cm]{$\cup \{(\locpriv^2, x\geq 1, \uncontrol, \emptyset, \locfinal^{\TB_2}),\,
               (\locpriv^3, x\geq 2, \uncontrol, \emptyset, \locfinal^{\TB_2})\}$.}

\end{itemize}

Inside $\TA_1$ and $\TA_2$, the clock $x$ can be reset every time it reaches (exactly) the value $2$, but only if a transition labelled by an action in $\ActionSet$ has been taken exactly at this time. Thus, at least one such action needs to be enabled every two time units until the action $\sharp$ is enabled, allowing to exit $\TA_1$ or $\TA_2$. Otherwise, the clock $x$ will exceed $2$, being never reset. Since the transitions labelled by $\sharp$ may only be taken when $x=2$, the run will stay stuck into $\TA_1$ or $\TA_2$.

The remaining transitions of this block are uncontrollable. Hence, there are runs reaching a final location in the block $\TB_1$ or $\TB_2$ only if every two time units an action in $\ActionSet$ is enabled, until $\sharp$ is enabled at an even-integer time.

If the strategy $\strategy$ respects this condition and is not of type 1 or 2 until the action $\sharp$ has been enabled, then it is in some $S_{w\sharp}$ for some word $w \in \ActionSet^*$.

We now focus on the sets of durations produced by the runs enabled by such a strategy.




Let $\strategy$ be a strategy in $S_{w \sharp}$.
  Let $d = 2|w|$ be the time at which $\sharp$ is emitted, we have: 
    \begin{itemize}
      \item If $w \in \regexp_1$: from $\TB_1$, all times $t \geq d + 2$ are both
        private and public (opaque).
      \item If $w \notin \regexp_1$: from $\TB_1$, all times $t \geq d + 2$ are
        private only.
      \item If $w \in \regexp_2$: from $\TB_2$, all times $t \geq d + 1$ are private.
      \item If $w \notin \regexp_2$: from $\TB_2$, all times $t \geq d + 2$ are
        private only.
    \end{itemize}
    Hence, in this case, the second block does not violate ET-opacity if and only if
    $w \in \regexp_1 \setminus \regexp_2$.

\noindent\textbf{Consequence.} Strategies in some $S_{w\sharp}$ ensure weak ET-opacity of the first parts of the TA $\TA$: $\TB_0, \TB_1$ and $\TB_2$. However, at this point, a strategy which does not let any run reach a final location (\ie{} of type 3) still ensures opacity, since the sets of private and public durations are both empty in this particular case.

The third block, which is entirely uncontrollable, produces non-opaque durations, so that the ET-opacity of $\TA$ requires the reachability of the final location of the block $\TB_2$, by a public run. This prevents (type 3)-strategies from ensuring weak ET-opacity. We now detail this last block $\TB_{\sharp}$.


\paragraph*{Third block.}

The third block $\TB_{\extraAction}$ (see \cref{fig:the-final-countdown}) is a
countdown gadget in which every transition is uncontrollable, so no strategy can
influence its behaviour. We use a large constant $N$ (whose value is fixed at the end
of the proof).
The key idea is to consider $N-1$ possible executions in parallel, and analyse at which times at least one of them is public. If there is a time at which all executions are simultaneously in the private location directly leading to the final location, then this duration is private. If moreover no run reaches a final location of another block $\TB_0, \TB_1$ or $\TB_2$, then this duration reveals the private character of the run to the attacker and violates ET-opacity. More precisely, if the non-opaque duration of $\TB_{\sharp}$ is greater than twice the length of every word in $\regexp_1$ and $\regexp_2$, then every public run of $\TB_2$ (reaching a final location) could produce this duration and make it opaque. As seen in the previous paragraph, such a public run would be enabled only if the strategy is in some $S_{w\sharp}$ with $w \in \regexp_1 \setminus \regexp_2$. 

Formally, we define the countdown TA: 
$\TB_{\sharp}= (\emptyset, \LocSet_{\sharp},q_0,\{q_p\},\{q_f\},\{x,y,z\},\invariant_{\sharp},\EdgeSet_{\sharp})$ where
\begin{itemize}
\item $\LocSet_{\sharp} = \{q_0, q_1, q_<, q_=, q_p, q_f\}$,
\item $\invariant_{\sharp}$ assigns $\top$ to every location in $\LocSet_{\sharp}$,
\item $\EdgeSet_{\sharp} = \{ (q_0, y=1, u, \{y\}, q_0), \; 
  (q_0, (0<z) \land (z<N) \land (y=0), u, \{y, z\}, q_1), \;$\\
  $(q_1, z=N, u, \{z\}, q_1), \; 
  (q_1, (x=N) \land (y < N-1), u, \{x\}, q_<), \; $\\
  $(q_<, y=N-1, u, \{y\}, q_1), \;
  (q_1, (x=N) \land (y = N-1), u, \{x\}, q_=), \; 
  (q_=, z=N, u, \{y\}, q_1), \; $\\$
  (q_=, x=0, u, \emptyset, q_p), \;
  (q_<, x=0, u, \emptyset, q_p), \;  
  (q_<, x=0, u, \emptyset, q_f), \;
  (q_p, x=0, u, \emptyset, q_f)\} 
  $
\end{itemize}

\begin{figure}[bt]
\centering
\begin{tikzpicture}[pta, node distance=2cm, thin]

\node[location, initial]   (q0)   at (0, 0.5)    {$q_0$};
\node[location]            (q1)   at (3,   0.5)    {$q_1$};
\node[location]            (q2)   at (6,   3)  {$q_<$};
\node[location]            (q4)   at (6,  -2)  {$q_=$};
\node[location, private]            (priv) at (8.5,   0.5)    {$q_p$};
\node[location, final] (qf)   at (10,  3)  {$ $};
 
\draw[->] (q0) edge
    node[above, align=center] {$0 < \styleclock{z} < N\wedge \styleclock{y}=0$\\ $\styleclock{y} \assign 0$\\ $\styleclock{z} \assign 0$}
    (q1);

\draw[->] (q0) edge[loop below]
    node[below, align=center] {$\styleclock{y} = 1$\\ $\styleclock{y} \assign 0$}
    (q0);

\draw[->] (q1) edge[loop below]
    node[below=+4pt, align=center] {$\styleclock{z} = N$\\ $\styleclock{z} \assign 0$}
    (q1);
 
\draw[->] (q1) edge[bend left=20]
    node[above=+8pt, align=center] {$\styleclock{x} = N$\\ $\wedge\; \styleclock{y} < N-1$\\ $\styleclock{x} \assign 0$}
    (q2);
\draw[->] (q2) edge[bend left=20]
    node[below right=-4pt,pos =0.3, align=center] {$\styleclock{y}= N-1$\\ $\styleclock{y} \assign 0$}
    (q1);
 
\draw[->] (q1) edge[bend right=20]
    node[below left, align=center, pos = 0.8] {$\styleclock{x} = N$\\ $\wedge\; \styleclock{y} = N-1$\\ $\styleclock{x} \assign 0$}
    (q4);
\draw[->] (q4) edge[bend right=20]
    node[above right, pos=0.3, align=center] {$\styleclock{z}=N$\\ $\styleclock{y} \assign 0$}
    (q1);

\draw[->] (q2) edge
    node[above] {$\styleclock{x} = 0$}
    (qf);
 
\draw[->] (q2) edge
    node[above right] {$\styleclock{x} = 0$}
    (priv);

\draw[->] (q4) edge
    node[right] {$\styleclock{x} = 0$}
    (priv);
 
\draw[->] (priv) edge
    node[below right] {$\styleclock{x} = 0$}
    (qf);
 
\end{tikzpicture}
	\caption{The countdown gadget $\TA_{\extraAction}$}
  \label{fig:the-final-countdown}
\end{figure}

\noindent\textbf{Intuition.} 
This gadget has three clocks, $x$, $y$ and $z$. A first idea is to deterministically alternate between the locations $q_=$ and $q_<$ (through $q_1$), from which a final location is reachable in zero time. The run goes to $q_=$ if the values, modulo $N$, of $x$ and $y$ differ of exactly 1 time unit, otherwise it goes to $q_<$. 
This alternation allows to produce private and public durations if the current location is $q_<$, and only private durations if it is $q_=$. If $q_1$ were the initial location, the non-opaque durations would be the times at which the run can only be in $q_{=}, q_p$ or in the final location. However, the initial location is $q_0$, going to $q_1$ through the only non-deterministic transition, which leads us to the second idea. This transition can be taken at exactly $N-1$ times. We thus consider $N-1$ runs arriving in $q_1$ at shifted times, so that a duration is non opaque only if all runs can simultaneously be in $q_=$. The time that takes the cycle between two passages into $q_=$ for each run is constant and depends entirely on its initial shift. The clock $z$ is used to store this initial shift between $x$ and $y$ in order to recover the same configuration in $q_1$ once the location $q_=$ has been reached. 

\noindent\textbf{Arguments.}  The clock $x$ cycles from $0$ to $N$. On entry
to $q_1$, a value $x_0 \in (0,N)$ is chosen non-deterministically and stored in both
$y$ and $z$, by the constant difference $x - z = x_0$ (modulo $N$). Each cycle of duration $N$ around $q_1$, visiting either $q_<$ or $q_=$, increments $y$ by $1$ (modulo
$x_0$): the run passes through $q_<$ when $y < N-1$ (resetting $x$ at $N$) then comes back to $q_1$ after having increased $y$ by 1 by resetting it at $N-1$ instead of $N$, 
and through $q_=$ when $y = N-1$ (resetting $y$ to the value stored in $z$). 

Formally, if
$q_1$ is entered with $x = x_0$, $y = z = 0$, then $q_=$ is visited
exactly at times $dN$ where $d$ is a multiple of $x_0$.
Indeed, exactly $N$ time units elapse between two consecutive transitions from $q_1$ to $q_<$. Since at the beginning $y = x - x_0$ and since each of these passage into $q_<$ decreases of 1 the value of $x-y$ (modulo $N$), this loop requiring $y < x-1$ (modulo $N$) is taken exactly $x_0 - 1$ times. Then the condition $y = x-1$ (modulo $N$) is fulfilled and the location $q_=$ is visited the next time $x$ reaches $N$. The total time elapsed before the first time the run reaches $q_=$ is thus $x_0 N$. We recover the same configuration each time the transition from $q_=$ to $q_1$ is taken.

 Each visit to $q_=$ (resp.\ $q_<$) spawns a
private (resp.\ both private and public) run that reaches $\locfinal$ immediately.
Since $x_0$ is chosen non-deterministically over $\{1,\ldots,N{-}1\}$, the set of
durations that are private but not public from this block is 
$\{dN \mid d \text{ is a multiple of } \operatorname{lcm}(1,\ldots,N{-}1)\}$. 
Denoting $l :=\operatorname{lcm}(1,\ldots,N{-}1)$, we thus have that
the first duration at which this block violates ET-opacity is $lN$.
By the second Chebyshev function and the prime number theorem, 
$l$ is asymptotically exponential in $N$.

If a strategy waits too long before enabling the symbol $\sharp$, the block $\TB_{\sharp}$ will produce such non-opaque durations.
 Indeed, for ET-opacity to hold at duration $lN$,
the second block must produce a public run of the same duration, which requires the
strategy to have emitted $w\sharp$ with $2|w\sharp| \leq lN$ and $w \in \regexp_1$.
Combined with the second-block analysis, ET-opacity holds for the full TA if and only if the strategy is in some set $S_{w\sharp}$ with $w \in \regexp_1 \setminus \regexp_2$. Hence, there exists such an opacity-ensuring strategy if and only 
if there exists $w$ with $2|w\sharp| \leq lN$ and $w \in \regexp_1 \setminus \regexp_2$.

\paragraph*{Size bound.}
There remains to verify that if $\regexp_1 \setminus \regexp_2 \neq \emptyset$, then it
contains a word of length at most $\frac{lN}{2}$.
A regular expression with \op{} can be unrolled into an equivalent expression without
squaring of at most exponential size. Classical automata theory then gives that the
shortest word separating two regular languages is at most exponential in the size of
their automata. Hence the shortest word in $\regexp_1 \setminus \regexp_2$ is at most
doubly exponential in the size of $\regexp_1, \regexp_2$. Since $l$ is
exponential in $N$ and $N$ is written in binary (so its value is exponential in the
size of the TA), $lN$ is doubly exponential. Choosing $N$ large enough (yet of
polynomial size in the input) ensures that the shortest separating word is smaller
than $\frac{lN}{2}$, completing the reduction.	
\end{proof}

\section{Conclusion}

In this paper, we studied the complexity of execution-time opacity for timed automata, both in the uncontrolled and controlled settings, and established tight complexity bounds for both problems.
For uncontrolled timed automata, we proved that weak and full ET-opacity are \coNEXPTIME-complete (\cref{ET-opac-unc}), matching the (corrected) existing upper bound from prior work. The hardness reduction leveraged a known \NEXPTIME-hard fragment of parametric TCTL model-checking, and showed how the negation of the relevant synchronisation property can be faithfully encoded as an ET-opacity instance.
For controllable timed automata, we proved that the existence of a sequential strategy ensuring weak or full ET-opacity is \EXPSPACE-complete (\cref{th:ET-opac-comp}). The hardness reduction proceeded from the \EXPSPACE-complete inequivalence problem for regular expressions with squaring, relying on the ability to construct a doubly exponential countdown gadget.
Additionally, we showed that weak and full ET-opacity remain inter-reducible in the controlled setting (\cref{prop:weak-full}), extending a result previously established for uncontrolled timed automata. The key difficulty was ensuring that a \emph{single} sequential strategy simultaneously witnesses both inclusion directions, which required a more subtle construction than in the uncontrolled case.

These results conclude the complexity-theoretic analysis of ET-opacity for timed automata, but several research directions remain open. On the theoretical side, ET-opacity has been studied for richer models such as parametric timed automata, and existing works~\cite{ALM23,AAL24} leave interesting complexity gaps. In particular, several problems solved for weak ET-opacity remain open for full ET-opacity; the inter-reduction established in \cref{prop:weak-full} may help in closing these gaps.
On the practical side, high worst-case complexity does not necessarily preclude efficient implementations, as the instances arising in practice are often far from the worst case. A central source of complexity when analysing timed automata is the region automaton, whose size is exponential in the number of clocks and the magnitude of the constants. The zone automaton was introduced precisely to mitigate this cost, by grouping large sets of region-automaton states into symbolic zones. For the analysis of ET-opacity in uncontrolled timed automata, the zone automaton can in principle be used with mild adaptations to the existing theoretical algorithm. However, the controlled setting poses a deeper challenge: the current decision procedure relies on a determinisation of the region automaton, a construction that does not carry over naturally to the zone-based representation. Such an alternative for one-clock TA state estimation is investigated in \cite{GSLG24}. Devising a zone-compatible alternative is therefore likely to be a key step towards a practically viable implementation of the ET-opacity control problem.


	\newcommand{\CCIS}{Communications in Computer and Information Science}
	\newcommand{\CSUR}{{ACM} Computing Surveys}
	\newcommand{\ENTCS}{Electronic Notes in Theoretical Computer Science}
	\newcommand{\FAC}{Formal Aspects of Computing}
	\newcommand{\FundInf}{Fundamenta Informaticae}
	\newcommand{\FMSD}{Formal Methods in System Design}
	\newcommand{\IJFCS}{International Journal of Foundations of Computer Science}
	\newcommand{\IJSSE}{International Journal of Secure Software Engineering}
	\newcommand{\IJC}{International Journal of Control}
	\newcommand{\IseCure}{International Journal of Information Security}
	\newcommand{\IPL}{Information Processing Letters}
	\newcommand{\IC}{Information and Computation}
	\newcommand{\JAIR}{Journal of Artificial Intelligence Research}
	\newcommand{\JLAP}{Journal of Logic and Algebraic Programming}
	\newcommand{\JLAMP}{Journal of Logical and Algebraic Methods in Programming} 
	\newcommand{\JISA}{Journal of Information Security and Applications}
	\newcommand{\JLC}{Journal of Logic and Computation}
	\newcommand{\JALC}{Journal of Automata, Languages and Combinatorics}
	\newcommand{\LMCS}{Logical Methods in Computer Science}
	\newcommand{\LNCS}{Lecture Notes in Computer Science}
	\newcommand{\RESS}{Reliability Engineering \& System Safety}
	\newcommand{\RTS}{Real-Time Systems}
	\newcommand{\SCP}{Science of Computer Programming}
	\newcommand{\SOSYM}{Software and Systems Modeling ({SoSyM})}
	\newcommand{\STTT}{International Journal on Software Tools for Technology Transfer}
	\newcommand{\TCS}{Theoretical Computer Science}
	\newcommand{\TOPLAS}{{ACM} Transactions on Programming Languages and Systems ({ToPLAS})}
	\newcommand{\ToPNoC}{Transactions on {P}etri Nets and Other Models of Concurrency}
	\newcommand{\TOSEM}{{ACM} Transactions on Software Engineering and Methodology ({ToSEM})}
	\newcommand{\TSE}{{IEEE} Transactions on Software Engineering}
	\newcommand{\TCAD}{{IEEE} Transactions on Computer-Aided Design of Integrated Circuits and Systems}

	\bibliographystyle{eptcs} 
	\bibliography{PTA}

\end{document}